%
%      REVISED   22.01.2011; 15.03.2011
%

\documentclass{iopart}

\usepackage{iopams}
\usepackage{setstack}
%%%%%%%%%%%%%%%%%%%%%%%%%%%%%%
\usepackage{amsthm}
\usepackage{braket}
\usepackage{graphicx}% Include figure files
\usepackage{tikz-cd}
\usepackage{dcolumn}% Align table columns on decimal point
\usepackage{bm}% bold math
\usepackage{epsf}

\def\polhk#1{\setbox0=\hbox{#1}{\ooalign{\hidewidth
      \lower1.5ex\hbox{`}\hidewidth\crcr\unhbox0}}}

\newtheorem{theorem}{Theorem}%[section]
\newtheorem{lemma}[theorem]{Lemma}
\newtheorem{proposition}[theorem]{Proposition}
\newtheorem{corollary}[theorem]{Corollary}

\begin{document}
\title[Upper Bounds for the Recurrence Time]{Upper Bounds for the Poincar\'{e} Recurrence Time in Quantum Mixed States}
%    Information for first author
\author{V Gimeno$^1$ and J M Sotoca$^2$}
%    Address of record for the research reported here
\address{$^1$ Departament de Matem\`{a}tiques- Institut Universitari de Matem\`{a}tiques i Aplicacions, Universitat Jaume I, Castell\'o,
Spain.}
%    Current address
%\curraddr{}
\ead{gimenov@uji.es}
%    \thanks will become a 1st page footnote.
% \thanks{The first author was supported in part by NSF Grant \#000000.}

\address{$^2$ Departamento de Lenguajes y Sistemas Inform\'aticos- Institute of New Imaging Technologies, Universitat Jaume I, Castell\'o,
Spain.%\\This line break forced with \textbackslash\textbackslash
}%
 \ead{sotoca@uji.es}

\begin{abstract}In this paper by using geometric techniques,  we provide upper bounds for the Poincar\'e recurrence time of a quantum mixed state with discrete spectrum of energies. In the case of discrete but finite spectrum we obtain two type of upper bounds; one of them depends on the uncertainty in the energy, and the other depends only on the (finite) number of states. In the case of discrete but non-finite spectrum we obtain in the same way two upper bounds defining the number of relevant states according to an statistical measurement. These bounds correspond to two different situations in the quantum recurrence process. The first bound is a recurrence time estimation purely quantum, while the other bound that is related with the number of relevant states survives in the classical limit.
\end{abstract}

%    General info
\pacs{03.65.-w, 03.67.Hk}
% Uncomment for keywords
\vspace{2pc}
\noindent{\it Keywords}: Poincar\'e recurrence time, mixed states, fidelity

%\keywords{}

%\submitto{\jpa}

\maketitle

%\dedicatory{This paper is dedicated to [[[]]].}

\maketitle
%\setcounter{tocdepth}{1}
%\tableofcontents

\section{Introduction}
The classical Poincar\'e recurrence theorem states that an isolated mechanical system with a fixed finite energy and in a fixed bounded volume, will return after a long enough time, close to its initial mechanical state. Poincar\'e recurrence theorem follows from Liouville's theorem (see \cite{Arnold} for instance) due to the volume-preserving property of the Hamiltonian flux of the classical phase space.
Nevertheless, the total amount of the volume of the phase space, and hence the recurrence time (length of time elapsed until the recurrence), depends on the Hamiltonian of the system.
%tret de  Classical Theory Per Paul McEvoy.

\begin{figure}[h]\label{fig1}
\begin{center}
\includegraphics[scale=0.20]{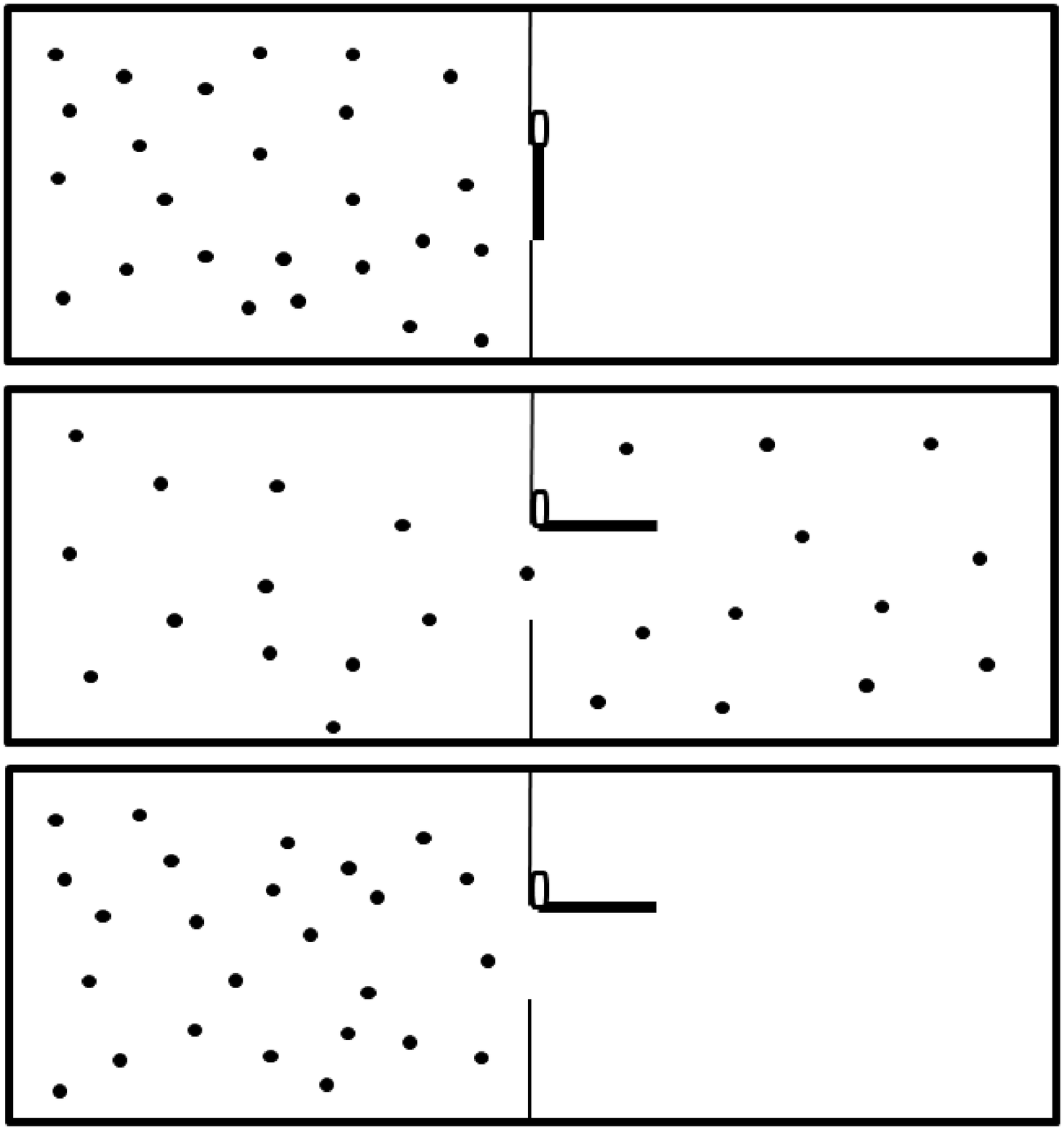}
\end{center}
\caption{The ``recurrence paradox.'' If we let a gas expand in a bounded recipient after a sufficient large time it will come back very close to the initial position.}
\end{figure}

The Poincar\'e recurrence theorem has counterintuitive implications when is considered within the context of the Second Law of Thermodynamics. According to this law, the measure of disorder of a system will never decrease - it will either increase or stay the same. For example, considering an isolated system - if you open a partition separating a chamber that contain gas and a chamber in vacuum, after a time, the gas molecules will again be collected in the first chamber (see Figure\ref{fig1}). This is known as the recurrence paradox and is most commonly reconciled by a claim that the amount of time that one must wait before the gas system returns to its initial state is orders of magnitude larger than the expected life of the universe.
% Tret de Applying the Poincar ́e Recurrence Theorem to Billiards IMPORTANT: Poincare Recurrence, Zermelo’s Second Law Paradox.

Indeed Zermelo used Poincar\'e result's to argue that the Boltmann's formula for the entropy should decrease after some sufficiently long time. Poincar\'e argument does not give any indication on how to estimate such recurrence times and it is an important question from a theoretical point of view.

In the quantum world there is an equivalence of the Poincar\'e recurrence principle (see \cite{Bocchieri,Schulman,Percival,Hogg,Chandrashekar2010,Venuti2015}). Thus, these quantum systems can be considered having time-periodic Hamiltonians in which the quasi energy spectrum is discrete. Furthermore, the evolution of these systems is almost-periodic even though the system may be in a non-stationary state or in a mixed state.

The Poincar\'e recurrence is relevant in order to understand ``non-reversible'' phenomena, such as the decoherence of a quantum system induced by the environment (see \cite{Venugopalan,Castagnino,DePonte,Berman,Zurek1982}) where in order to obtain effective decoherence, a small quotient between the decoherence and recurrence time is required. An other place where the Poincar\'e recurrence could play an important role is the lost of information in quantum black-holes (see \cite{Dyson,Barbon,Siopsis}).

Consider the evolution of an initial pure state $\ket{\Psi_0}\in\mathcal{H}$ of the Hilbert space $\mathcal{H}$ by the Hamiltonian operator $H$ given by the following Schr\"odinger's equation

\begin{equation}\label{sch}
\frac{\partial}{\partial t}\ket{\Psi(t)}=-\frac{i}{\hbar}H\ket{\Psi(t)},\quad \ket{\Psi(0)}=\ket{\Psi_0}.
\end{equation}
Recurrence implies in particular that for any $\epsilon>0$, there exists $t_0$ large enough such that
$$
\Vert \ket{\Psi(t_0)}-\ket{\Psi_0}\Vert^2<\epsilon
$$

Under certain type of wave packet (see \cite{Venugopalan,Castagnino,DePonte,Berman,Zurek1982,Kaminishi2015}) estimations for the recurrence time can be obtained. In \cite{Peres}, assuming that $\ket{\Psi_0}$ has a finite decomposition in the basis of eigenfunctions $\{\ket{i}\}$ of the Hamiltonian $H$, \emph{i.e.},
\begin{equation}\label{decomp0}
\ket{\Psi_0}=\sum_{i=1}^nz_i\ket{i}, \quad H\ket{i}=2\pi\hbar\nu_i\ket{i},\quad \Vert\ket{\Psi_0}\Vert=\Vert\ket{i}\Vert=1,
\end{equation}
 the following estimation of the recurrence time appears
\begin{equation}\label{estimation1}
t_{\rm rec}\sim\frac{1}{n^{1/2}\nu\sigma}
\end{equation}
where
$$
\nu=\frac{1}{n}\sum_{i=1}^n\nu_i,\quad\sigma=\frac{\pi^{\frac{n-1}{2}}R^{n-1}}{\Gamma\left[(n+1)/2\right]},\quad R=\frac{(n\epsilon)^{1/2}}{2\pi}
$$
In \cite{Bhartacharyya} the authors obtain several expressions for the recurrence time, between them we remark  the following estimation
\begin{equation}\label{estimation2}
t_{\rm rec}\sim\frac{1}{(n-1)^{1/2}\overline{\nu}_{m1}}\Gamma\left(\frac{n}{2}\right)\left(\frac{8\pi}{\epsilon\left(n-1\right)}\right)^{(n-2)/2}
\end{equation}
where here
$$
\overline{\nu}_{m1}=\frac{1}{\sqrt{n-1}}\left(\sum_{m=2}^n\left(\nu_{m}-\nu_1\right)^2\right)^{\frac{1}{2}}
$$

%\subsection{Pure states: Discrete and finite spectrum}

We should remark here that the estimations given in equations (\ref{estimation1}) and (\ref{estimation2}) depend on the average of the energy (or the gap energy) of the quantum system.

%\subsection{Mixed states with discrete spectrum}

The general case in quantum mechanics deals with mixed states. A mixed state cannot be described as a ket vector. Instead, it is described by its associated density matrix (or density operator). The temporal evolution of a mixed state $\rho_0$ is given by the von-Newmann law
$$
\dot\rho(t)=-\frac{i}{\hbar}[H,\rho(t)],\quad \rho(0)=\rho_0.
$$
The transition probability between the mixed state $\rho$ to the mixed state $\sigma$ is given now by the  \emph{fidelity}  $F(\rho,\sigma)$
$$
F\left(\rho,\sigma\right):= \tr\left[\sqrt{\sqrt{\rho}\sigma\sqrt{\rho}}\right].
$$

Surprising enough in the case of finite number of states there is an upper bound for the recurrence time that does not depend on the specific Hamiltonian. In fact we can state

\begin{theorem}\label{mixed}
Let $\rho_0$ be a mixed state of the Hilbert space $\mathcal{H}$ of finite dimension $n={\rm dim}(\mathcal{H})$. Let $\rho(t)$ denote the unitary evolution given by the Hamiltonian $H$, \emph{i.e.},
$$
\dot\rho(t)=-\frac{i}{\hbar}[H,\rho(t)],\quad \rho(0)=\rho_0.
$$
Then, for any $t>0$ and any $\epsilon\in (0,1]$ there exists a time $t_{\rm rec}$ such that
\begin{equation}\label{eq:9.5}
F\left(\rho_0,\rho\left(t_{\rm rec}\right)\right) \geq \epsilon,
\end{equation}
with $t_{\rm rec}=j\cdot t$, $j\in \mathbb{N}$ and such that
$$
1\leq j\leq \sqrt{\pi}\frac{\Gamma(n^2)}{\Gamma(n^2+\frac{1}{2})}\frac{1}{\int_0^{\frac{\sqrt{2-2\epsilon}}{2}}\sin^{2n^2-2}(s)ds}\quad\cdot
$$
\end{theorem}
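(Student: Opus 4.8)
The plan is to transport the whole problem onto the unit sphere of the purification space and to run there a quantitative Poincar\'e/pigeonhole argument, so that the bound on $j$ emerges as a ratio of a sphere volume to a spherical-cap volume. First I would purify: choose a purification $\ket{\Psi_0}\in\mathcal H\otimes\mathcal H'\cong\mathbb C^{n^2}$ of $\rho_0$, with $\dim\mathcal H'=n$, so that $\rho_0=\tr_{\mathcal H'}\ket{\Psi_0}\bra{\Psi_0}$ and $\Vert\ket{\Psi_0}\Vert=1$. Since the von Neumann evolution integrates to $\rho(s)=U(s)\rho_0 U(s)^{\dagger}$ with $U(s)=e^{-\frac{i}{\hbar}Hs}$, the vector $\ket{\Psi(s)}:=(U(s)\otimes I)\ket{\Psi_0}$ is a purification of $\rho(s)$. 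Writing $V:=U(t)\otimes I$, which is unitary on $\mathbb C^{n^2}$, the sampled state $\rho(jt)$ is purified by $V^{j}\ket{\Psi_0}$, and by Uhlmann's theorem the fidelity dominates the overlap of any chosen purifications, so
\[
F\!\left(\rho_0,\rho(jt)\right)\ \ge\ \left|\bra{\Psi_0}V^{j}\ket{\Psi_0}\right| .
\]
It therefore suffices to make the right-hand side at least $\epsilon$ for some $j\in\mathbb N$ in the stated range.

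Next I would set up the geometry. I regard the normalized purifications as points of the sphere $S^{2n^2-1}\subset\mathbb C^{n^2}\cong\mathbb R^{2n^2}$, on which $V$ acts as an isometry. As the real part of the Hermitian product is the Euclidean product, if $\psi$ denotes the geodesic (angular) distance from $\ket{\Psi_0}$ to $\ket{\Psi}$ then $\mathrm{Re}\,\braket{\Psi_0|\Psi}=\cos\psi$, whence $\left|\braket{\Psi_0|\Psi}\right|\ge\cos\psi$. Consequently the overlap is at least $\epsilon$ as soon as $\ket{\Psi}$ lies in the geodesic cap $C_0$ about $\ket{\Psi_0}$ of angular radius $\theta_0:=\arccos\epsilon$; the half-angle identity gives $\sin(\theta_0/2)=\tfrac{\sqrt{2-2\epsilon}}{2}$, exactly the upper limit appearing in the statement. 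Thus the event ``$F(\rho_0,\rho(jt))\ge\epsilon$'' is implied by ``$V^{j}\ket{\Psi_0}\in C_0$''.

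The decisive step is homogeneity. Because $V$ is a sphere isometry, the angular distance between $V^{i}\ket{\Psi_0}$ and $V^{j}\ket{\Psi_0}$ depends only on $|i-j|$, equalling the distance from $V^{|i-j|}\ket{\Psi_0}$ to $\ket{\Psi_0}$. Arguing by contradiction, if recurrence failed for every $j=1,\dots,N$, so that each $V^{j}\ket{\Psi_0}$ sat at angular distance $>\theta_0$ from $\ket{\Psi_0}$, then all $N+1$ points $\ket{\Psi_0},V\ket{\Psi_0},\dots,V^{N}\ket{\Psi_0}$ would be pairwise separated by more than $\theta_0$. The geodesic caps of radius $\theta_0/2$ centred at them are then pairwise disjoint, so
\[
(N+1)\,\mathrm{Vol}\!\big(\mathrm{cap}(\theta_0/2)\big)\ \le\ \mathrm{Vol}\!\big(S^{2n^2-1}\big),
\]
which forces a recurrence at some $j\le \mathrm{Vol}(S^{2n^2-1})/\mathrm{Vol}(\mathrm{cap}(\theta_0/2))$.

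It then remains to insert the classical volume formulas $\mathrm{Vol}(S^{2n^2-1})=2\pi^{n^2}/\Gamma(n^2)$ and $\mathrm{Vol}(\mathrm{cap}(\alpha))=\mathrm{Vol}(S^{2n^2-2})\int_0^{\alpha}\sin^{2n^2-2}(s)\,ds$ with $\alpha=\theta_0/2=\arcsin\!\big(\tfrac{\sqrt{2-2\epsilon}}{2}\big)$, to simplify the resulting quotient of Gamma factors, and to bound $\int_0^{\arcsin\beta}$ below by $\int_0^{\beta}$ (valid since $\arcsin\beta\ge\beta$ and the integrand is nonnegative) so as to reach the clean upper limit $\beta=\tfrac{\sqrt{2-2\epsilon}}{2}$ of the statement. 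I expect the main obstacle to be precisely this quantitative bookkeeping: keeping the phase under control in the passage from fidelity to real spherical distance, and matching the cap-to-sphere volume ratio to the exact constant $\sqrt{\pi}\,\Gamma(n^2)/\Gamma(n^2+\tfrac12)$; the homogeneity step, by contrast, is what makes a purely volumetric, Hamiltonian-independent bound possible at all. The cases $\epsilon=1$ (empty cap, vacuous bound) and $j\ge1$ are immediate.
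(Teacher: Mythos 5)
Your proposal is correct and rests on the same geometric skeleton as the paper's proof, but you realize it through genuinely different technical means. The paper works on the Hilbert--Schmidt unit sphere $\widetilde S=\{W\in M_n(\mathbb{C}):\tr(WW^\dag)=1\}\cong \mathbb{S}^{2n^2-1}$, which is exactly your space of purification vectors: its projection $\pi(A)=AA^\dag$ is your partial trace, and left multiplication by $e^{-iHt/\hbar}$ is your $V=U(t)\otimes I$. Where the paper descends from the sphere to density matrices via the principal bundle ${\rm U}(n)\to S\to\mathcal{P}^+$ and the fact that $\pi$ is a Riemannian submersion onto the Bures metric (so ${\rm d}_{\rm Bures}\le {\rm d}^S$), you substitute Uhlmann's theorem, $F(\rho_0,\rho(jt))\ge\left|\braket{\Psi_0|V^j|\Psi_0}\right|\ge \mathrm{Re}\braket{\Psi_0|V^j\Psi_0}=\cos\psi$ --- an equivalent but more elementary and self-contained step, since Uhlmann's theorem is precisely the metric content of that submersion. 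Likewise, your packing of disjoint caps of radius $\theta_0/2$ centred on the orbit points replaces the paper's recurrence lemma (Theorem \ref{thm4}, which intersects iterates of a single ball and then uses the triangle inequality); both yield $j\le\mu(\mathbb{S}^{2n^2-1})/\mu(\mathrm{cap})$. Quantitatively your route is even slightly sharper: you only need angular distance $\le\arccos\epsilon$, whereas the paper needs $\le\sqrt{2-2\epsilon}\le\arccos\epsilon$ (from $F\ge 1-r^2/2$), so your cap radius $\arcsin\bigl(\tfrac{\sqrt{2-2\epsilon}}{2}\bigr)$ exceeds the paper's $\tfrac{\sqrt{2-2\epsilon}}{2}$ before you deliberately weaken to match the statement. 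One caveat on the bookkeeping you flag as the main obstacle: the exact sphere-to-cap ratio is $\sqrt{\pi}\,\Gamma\bigl(n^2-\tfrac12\bigr)/\Gamma(n^2)$ divided by $\int_0^{\alpha}\sin^{2n^2-2}(s)\,ds$, which is what the paper's own intermediate Proposition also gives, since $\int_0^{\pi}\sin^{2n^2-2}(s)\,ds=\sqrt{\pi}\,\Gamma\bigl(n^2-\tfrac12\bigr)/\Gamma(n^2)$; the constant displayed in the theorem, $\sqrt{\pi}\,\Gamma(n^2)/\Gamma\bigl(n^2+\tfrac12\bigr)=\int_0^{\pi}\sin^{2n^2-1}(s)\,ds$, reflects an off-by-one in the exponent internal to the paper, and since $\Gamma\bigl(n^2-\tfrac12\bigr)\Gamma\bigl(n^2+\tfrac12\bigr)>\Gamma(n^2)^2$ neither your argument nor the paper's literally reproduces it (your $\arcsin$ gain vanishes as $\epsilon\to 1$). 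So you should not contort your derivation to hit that constant: state the bound with $\Gamma\bigl(n^2-\tfrac12\bigr)/\Gamma(n^2)$; the discrepancy is a slip in the statement, not a gap in your proof.
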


Inequality (\ref{eq:9.5}) implies by the the Fuchs–van de Graaf inequalities
\begin{equation}\label{eq:10.5}
\Vert \rho(t_{\rm rec}(\epsilon))-\rho_0\Vert^2\leq 4\left(1-\epsilon^2\right).
\end{equation}

If we take into account the uncertainty in the energy we can recover for a mixed state equivalent expressions  to (\ref{estimation1}) and (\ref{estimation2}). The statement of the following theorem in fact do so

\begin{theorem}\label{thm:1.3.2}
Let $\rho_0$ be a mixed state of the Hilbert space $\mathcal{H}$ of finite dimension $n={\rm dim}(\mathcal{H})$. Let $\rho(t)$ denote the unitary evolution given by the Hamiltonian $H$, \emph{i.e.},
$$
\dot\rho(t)=-\frac{i}{\hbar}[H,\rho(t)],\quad \rho(0)=\rho_0.
$$
Suppose that the initial mixed state $\rho_0$ has non-zero uncertainty in the energy $\triangle E_{\rho_0}\neq 0$ (with $\triangle E_{\rho_0}=\sqrt{\tr\left(H^2\rho_0\right)-\tr\left(H\rho_0\right)^2}$). Then, for any $\epsilon>0$ with
$$
\epsilon< \pi\cdot\min_{k\in\{1,\cdots,n\}}\left\{\sqrt{\tr\left(\rho_0 \ket{k}\bra{k}\right)}
\right\}
$$where $\{\ket{k}\}$ is a basis of ortonormal eigenstates of the Hamiltonian ( \emph{i.e.}, $H\ket{k}=E_k\ket{k}$ and $\braket{j\vert k}=\delta_{j,k}$), there exists a time $t_{\rm rec}(\epsilon)$ such that
\begin{equation}\label{eq:5.2}
F\left(\rho_0,\rho\left(t_{\rm rec}(\epsilon)\right)\right) \geq 1-\frac{\epsilon^2}{4},
\end{equation}
with %$\triangle$ $\left( \right)$ $\underset{k=1}\Pi$
\begin{equation}\label{eq:5.5}
\epsilon\frac{\hbar}{\triangle E_{\rho_0}}\leq t_{\rm rec}(\epsilon)\leq \left(\frac{c_n\overset{n}{\underset{k=1}{\Pi}}\sqrt{\tr\left(\rho_0 \ket{k}\bra{k}\right)}}{\epsilon^{n-1}
}\right)\,\frac{\hbar}{\triangle E_{\rho_0}}
\end{equation}
where $c_n=(n-1)\Gamma(\frac{n-1}{2})4^{n-1}\pi^{\frac{n+1}{2}}$.

\end{theorem}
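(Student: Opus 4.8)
The plan is to reduce the mixed-state problem to a pure-state-type trigonometric sum that involves only the energy populations $p_m:=\tr(\rho_0\ket{m}\bra{m})$, and then to run a quantitative recurrence (pigeonhole) argument on a torus. First I would establish the key inequality
\[
F(\rho_0,\rho(t))\ \geq\ \left|\tr\!\left(\rho_0\,e^{-iHt/\hbar}\right)\right|\ =\ \left|\sum_{m=1}^n p_m\,e^{-iE_mt/\hbar}\right|.
\]
This follows from Uhlmann's characterisation of the fidelity by purifications: writing $\rho_0=\sum_k\lambda_k\ket{e_k}\bra{e_k}$ and purifying it as $\ket{\Psi_0}=\sum_k\sqrt{\lambda_k}\ket{e_k}\otimes\ket{e_k}$, the vector $(U_t\otimes I)\ket{\Psi_0}$ purifies $\rho(t)=U_t\rho_0U_t^{\dagger}$ with $U_t=e^{-iHt/\hbar}$, and $\braket{\Psi_0|(U_t\otimes I)|\Psi_0}=\tr(\rho_0 U_t)$. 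Since the right-hand side depends only on the diagonal populations $p_m$ in the energy basis, the coherences of $\rho_0$ drop out and the whole problem becomes geometric.

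Next I would describe the good region. Factoring out the global phase and introducing the relative phases $\psi_m=(E_m-E_1)t/\hbar$, the bound $\cos x\geq 1-x^2/2$ gives $|\sum_m p_m e^{-i\psi_m}|\geq 1-\frac{1}{2}\,\mathrm{Var}_p(\psi)$, where $\mathrm{Var}_p(\psi)=\psi^{\top}Q\psi$ with $Q=\mathrm{diag}(p)-pp^{\top}$ restricted to the coordinates $\psi_2,\dots,\psi_n$. Hence $F\geq 1-\epsilon^2/4$ is guaranteed once $\psi$ lies in the ellipsoid $\{\,\psi^{\top}Q\psi\leq \epsilon^2/2\,\}$ inside the torus $\mathbb{T}^{n-1}=(\mathbb{R}/2\pi\mathbb{Z})^{n-1}$. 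The two features of the statement now appear: the determinant identity $\det Q=\prod_{m=1}^n p_m$ makes the $Q$-volume of this ellipsoid proportional to $\epsilon^{\,n-1}/\prod_m\sqrt{p_m}$, and the hypothesis $\epsilon<\pi\min_m\sqrt{p_m}$ is exactly the condition that every semi-axis (of size $\sim\epsilon/\sqrt{p_m}$) stays below $\pi$, so the ellipsoid embeds in the torus without wrap-around.

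The time scale $\hbar/\triangle E_{\rho_0}$ enters through the speed of the phase flow. The trajectory $t\mapsto\psi(t)$ is the straight line $t\,\omega$ with $\omega_m=(E_m-E_1)/\hbar$, and its speed measured in the $Q$-metric is $\sqrt{\omega^{\top}Q\omega}=\frac{1}{\hbar}\sqrt{\mathrm{Var}_p(E)}$. Using $\tr(H\rho_0)=\sum_m p_mE_m$ and $\tr(H^2\rho_0)=\sum_m p_mE_m^2$, this variance equals $(\triangle E_{\rho_0})^2$, so the flow moves at the constant $Q$-speed $\triangle E_{\rho_0}/\hbar$. This is the geometric reason both bounds carry the factor $\hbar/\triangle E_{\rho_0}$.

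Finally I would run the pigeonhole argument. Sampling the flow at times $t_j=j\tau$ with $\tau=\epsilon\,\hbar/\triangle E_{\rho_0}$ places consecutive samples at $Q$-distance $\epsilon$ apart; partitioning $\mathbb{T}^{n-1}$ into cells of $Q$-diameter $\leq\epsilon/\sqrt{2}$ yields a cell count $N\asymp (2\pi)^{n-1}\prod_m\sqrt{p_m}\,/\,\epsilon^{\,n-1}$, up to a dimensional constant built from the ball volume $\pi^{(n-1)/2}/\Gamma(\frac{n+1}{2})$. By pigeonhole two samples $t_a<t_b$ share a cell, so $\psi((b-a)\tau)$ lies in the good ellipsoid and $t_{\rm rec}:=(b-a)\tau$ satisfies $F(\rho_0,\rho(t_{\rm rec}))\geq 1-\epsilon^2/4$. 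The lower bound $t_{\rm rec}\geq\tau=\epsilon\,\hbar/\triangle E_{\rho_0}$ is immediate because $b-a\geq1$, while $t_{\rm rec}\leq N\tau$ produces the upper bound; collecting the factors $(2\pi)^{n-1}$, the Gamma factors $\Gamma(\frac{n-1}{2})$ (via $\Gamma(\frac{n+1}{2})=\frac{n-1}{2}\Gamma(\frac{n-1}{2})$) and the powers of $2$ and $\pi$ produces $c_n$. I expect the main obstacle to be precisely this last bookkeeping — pinning down the exponent $n-1$ and the exact constant $c_n$ — together with verifying $\det Q=\prod_m p_m$ and the embedding condition; the conceptual steps (the fidelity lower bound and the identification of the $Q$-speed with $\triangle E_{\rho_0}/\hbar$) are comparatively clean.
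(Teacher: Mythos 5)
Your proposal is sound in outline and follows a genuinely different route from the paper's. The paper works ``upstairs'': it lifts $\rho(t)$ to $W(t)=e^{-iHt/\hbar}\sqrt{\rho_0}$ inside the Hilbert--Schmidt sphere, observes that the orbit lies on the flat $n$-torus $\times_{k}\mathbb{S}^1\left(\sqrt{p_k}\right)$ with $p_k=\tr\left(\rho_0\ket{k}\bra{k}\right)$, along which the evolution is a geodesic of speed $\triangle E_{\rho_0}/\hbar$, and bounds the first return time by a \emph{continuous} tube-volume argument (an embedded tube of radius $\theta<\epsilon/2$ about the trajectory has volume $\frac{2\pi^{(n-1)/2}}{(n-1)\Gamma(\frac{n-1}{2})}\theta^{n-1}\cdot\mathrm{length}$, and embeddedness up to $t_{\rm rec}$ is proved via the normal/tangent bundle dichotomy); fidelity is reached through the Bures metric and the Riemannian submersion $\pi(A)=AA^\dagger$, and the hypothesis $\epsilon<\pi\min_k\sqrt{p_k}$ is precisely the injectivity radius of that torus. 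You instead obtain the fidelity link from the Uhlmann purification bound $F(\rho_0,\rho(t))\geq\left|\sum_m p_m e^{-iE_mt/\hbar}\right|$ --- simpler, and bypassing the Bures/submersion machinery entirely --- then quotient the global phase to an $(n-1)$-torus carrying the covariance form $Q$, and run a Dirichlet-type pigeonhole in place of the tube. Your two enabling computations are correct: $\det Q=\prod_m p_m$ (matrix determinant lemma), which reproduces on the reduced torus the same factor $\prod_k\sqrt{p_k}$ the paper gets from the circle radii, and the $Q$-speed $\sqrt{\omega^{\top}Q\,\omega}=\triangle E_{\rho_0}/\hbar$. On the latter you are in fact more careful than the paper, whose displayed speed $\sqrt{\sum_k((E_k-\lambda)/\hbar)^2}$ omits the weights $p_k$ (a typo: since $g(X_j,X_l)=\delta_{jl}\,p_j$, the weighted norm you compute is what the paper actually uses when it sets $\lambda=\tr(H\rho_0)$).

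Two points must still be nailed down before this proves the statement \emph{as written}. First, the cell count: your reduced torus is $\mathbb{R}^{n-1}/2\pi\mathbb{Z}^{n-1}$ with a skewed form $Q$, not a product of circles, so $N\asymp(2\pi)^{n-1}\prod_m\sqrt{p_m}\,/\,\epsilon^{n-1}$ needs a shortest-vector bound: for $0\neq k\in\mathbb{Z}^{n-1}$ the entries of $(0,k_2,\dots,k_n)$ take two integer values at distance $\geq 1$, whence $\min_c\sum_m p_m(k_m-c)^2\geq\frac{1}{4}\min_m p_m$, so the injectivity radius is at least $\frac{\pi}{2}\sqrt{\min_m p_m}>\epsilon/2$ and a Voronoi partition of a maximal separated net gives the volumetric count. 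This --- not your semi-axis heuristic, since the axes of $Q$ go with $p_m(1-p_m)$ and eigenvalue interlacing --- is where the hypothesis $\epsilon<\pi\min_m\sqrt{p_m}$ really enters. Second, the constant: such a net yields $t_{\rm rec}\leq C^{\,n-1}\,\Gamma\left(\frac{n+1}{2}\right)\prod_m\sqrt{p_m}\ \epsilon^{-(n-2)}\,\hbar/\triangle E_{\rho_0}$ with $C\approx 8\sqrt{2\pi}$; this carries one fewer power of $\epsilon$ than the theorem (good), but since $\epsilon$ may be as large as $\pi\min_m\sqrt{p_m}$, the spare factor of $\epsilon$ does not absorb the excess over $c_n=(n-1)\Gamma\left(\frac{n-1}{2}\right)4^{n-1}\pi^{\frac{n+1}{2}}=2\Gamma\left(\frac{n+1}{2}\right)4^{n-1}\pi^{\frac{n+1}{2}}$ uniformly in the admissible range. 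So as sketched you prove a bound of the same form and comparable strength, but not verbatim with the stated $c_n$; to recover the exact constant, replace the point pigeonhole by the continuous version of your own argument --- bound the volume of an embedded $\theta$-tube about the whole trajectory (your no-wrap-around discussion is exactly the embeddedness the paper proves) and let $\theta\to\epsilon/2$, which is the paper's route.
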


The estimation (\ref{eq:5.5}) can be simplified by using the mean-arithmetic mean-geometric inequality
\begin{eqnarray*}
  \overset{n}{\underset{k=1}{\Pi}}\sqrt{\tr\left(\rho_0 \ket{k}\bra{k}\right)}&=&\left[\left(\overset{n}{\underset{k=1}{\Pi}}\tr\left(\rho_0 \ket{k}\bra{k}\right)\right)^\frac{1}{n}\right]^\frac{n}{2}\\
  &\leq& \left[\frac{\overset{n}{\underset{k=1}{\sum}}\tr\left(\rho_0 \ket{k}\bra{k}\right)}{n}\right]^\frac{n}{2}=\frac{1}{n^{\frac{n}{2}}}
\end{eqnarray*}

Hence under the same hypothesis of the above theorem we can state
\begin{equation}
\epsilon\frac{\hbar}{\triangle E_{\rho_0}}\leq t_{\rm rec}(\epsilon)\leq  \left(\frac{c_n}{\epsilon^{n-1}n^{\frac{n}{2}}}\right)\frac{\hbar}{\triangle E_{\rho_0}}
\end{equation}
Note that the lower bound is the Mandelstam-Tamm inequality (see \cite{Mandelstam}). Observe, moreover, that by using the Fuchs-van de Graaf inequalities, the inequality (\ref{eq:5.2}) implies
$$
1-\frac{\epsilon^2}{4}\leq\sqrt{1-\frac{1}{4}\Vert\rho(t_{\rm rec}(\epsilon))-\rho_0 \Vert^2}
$$
Therefore,
\begin{equation}\label{eq:8.5}
\Vert\rho(t_{\rm rec}(\epsilon))-\rho_0 \Vert^2\leq 4\left(1-\left(1-\frac{\epsilon^2}{4}\right)^2\right)=2\epsilon^2\left(1-\frac{\epsilon^2}{8}\right)
\end{equation}

We can use this inequality to obtain upper bounds in the case of a discrete but non-finite spectrum, from the problem statement proposed in \cite{Percival} where the density matrix is an almost periodic function of time. Consider $\rho(t)$ the density matrix of a system for a set of discrete stationary states, with energy levels $E_k, k=0,1,2,\cdots,$ where some of them may have the same value if they are degenerate. In energy representation the matrix elements are defined as
\begin{equation*}
\rho_{kk'}(t)=\langle k\vert \rho(t)\vert k'\rangle\quad.
\end{equation*}

Let $T_k=\vert k\rangle\langle k\vert$ be the projection operator onto the $k$th stationary state, then
\begin{equation*}
\rho^{kk'}(t)=T_k\rho(t)T_{k'}\quad,
\end{equation*}
is the matrix which energy representation has only one nonzero element, equal to $\rho_{kk'}(t)$ and located in $(k,k')$. These matrices are orthogonal in density space
\begin{equation*}
\tr\left(\rho^{kk'}(t)^\dag\rho^{k''k'''}(t)\right)=\delta_{kk''}\delta_{k'k'''}\vert\rho_{kk'}(t)\vert^2\quad,
\end{equation*}
and
\begin{eqnarray*}
\rho(t)&=\sum_{k=0}^{\infty}\sum_{k'=0}^{\infty}\rho^{kk'}(t)\\
&=\sum_{k=0}^{\infty}\sum_{k'=0}^{\infty}\rho^{kk'}(0)e^{i\omega_{kk'}t}\quad,
\end{eqnarray*}
where $\omega_{kk'}=\frac{E_{k'}-E_k}{\hbar}$.
Now, regard the finite sum
\begin{equation*}
\sigma^{N}(t)=\sum_{k=0}^N \sum_{k'=0}^{N}\rho^{kk'}(t)\quad,
\end{equation*}
as an approximation to $\rho(t)$. Then, the squared error is
\begin{eqnarray*}
\Vert \rho(t)-\sigma^{N}(t)\Vert^2&=\Vert \sum_{n=N+1}^\infty \sum_{k'=N+1}^{\infty}\rho^{kk'}(t)\Vert^2\\
&=\sum_{k=N+1}^\infty \sum_{k'=N+1}^{\infty}\Vert \rho^{kk'}(t)\Vert^2\\
&=\sum_{k=N+1}^\infty \sum_{k'=N+1}^{\infty}\Vert \rho^{kk'}(0)\Vert^2\quad.
\end{eqnarray*}

The second equality is obtained from the orthogonality of $\rho^{kk'}$. Because the error is not time-dependent, $\sigma^{N}(t)$ converges uniformly to $\rho(t)$ (in the $\Vert\, \Vert$-norm sense).
Let us denote by $\delta_N$ the time-independent quantity
$$
\delta_N:=\Vert \rho(t)-\sigma^{N}(t)\Vert^2=\Vert \rho(0)-\sigma^{N}(0)\Vert^2
$$
So, $\rho(t)$ can be approximated by $\sigma^{N}(t)$ in the sense that $\delta_N\to 0$ when $N\to \infty$.
In such a case  we shall say that the mixed state $\rho$ has $N$ \emph{relevant states with error term}  of  $\delta_N$.

Observe that the trace of $\sigma^{N}(t)$ is time-independent because
$$
\tr(\sigma^{N}(t))=\tr(e^{\frac{-iH}{\hbar}t}\sigma^{N}(0)e^{\frac{iH}{\hbar}t})=\tr(\sigma^{N}(0))
$$
We can define
$$
\widetilde\sigma^{N}(t):
=\frac{1}{\tr(\sigma^{N}(0))}\sigma^{N}(t)
$$
then
$$
\widetilde\sigma^{N}(t)=e^{\frac{-iH}{\hbar}t}\, \widetilde\sigma^{N}(0)\, e^{\frac{iH}{\hbar}t},\quad \tr(\widetilde\sigma^{N}(t))=1.
$$
Hence, $\widetilde\sigma^{N}$ fulfills the hypothesis of theorem \ref{thm:1.3.2} and theorem \ref{mixed}. But  by using the triangular inequality
\begin{eqnarray*}
  \Vert \rho(t_{\rm rec}(\epsilon))-\rho(0)\Vert&\leq &\Vert \rho(t_{\rm rec}(\epsilon))-\sigma^{N}(t_{\rm rec}(\epsilon))\Vert\\
  & & +\Vert\sigma^{N}(t_{\rm rec}(\epsilon))-\sigma^{N}(0)\Vert+\Vert \sigma^{N}(0)-\rho(0)\Vert\\
  &\leq & 2\sqrt{\delta_N}+\Vert\sigma^{N}(t_{\rm rec}(\epsilon))-\sigma^{N}(0)\Vert
\end{eqnarray*} Then,

\begin{eqnarray*}
  \Vert \rho(t_{\rm rec}(\epsilon))-\rho(0)\Vert\leq  2\sqrt{\delta_N}+\Vert\widetilde\sigma^{N}(t_{\rm rec}(\epsilon))-\widetilde\sigma^{N}(0)\Vert \tr(\sigma^{N}(0))
\end{eqnarray*}
Observe that
$$
P_N:=\tr(\sigma^{N}(0))=\sum_{k=1}^N\rho_{k,k}(0)=\sum_{k=1}^N\braket{k\vert\rho(0)\vert k}=\sum_{k=1}^N\tr\left(\rho(0)\ket{k}\bra{k}\right)
$$
is the total probability of $\rho$ to be in one of the relevant $N$ states. By using inequalities (\ref{eq:8.5}) and (\ref{eq:10.5}) we can state the following two corollaries

\begin{corollary}\label{cor:1.1.3}
Let $\rho_0$ be a mixed state of the Hilbert space $\mathcal{H}$. Let $\rho(t)$ denote the unitary evolution given by the Hamiltonian $H$, \emph{i.e.},
$$
\dot\rho(t)=-\frac{i}{\hbar}[H,\rho(t)],\quad \rho(0)=\rho_0.
$$
Suppose that the spectrum of the Hamiltonian is discrete and $\rho_0$ has $N$ relevant states with error term $\delta_N$. Suppose that $N$-approximation to $\rho_0$, $\widetilde \sigma^{N}$ has non-zero uncertainty in the energy $\triangle E_{\widetilde \sigma^{N}}\neq 0$. Then, for any $\epsilon>0$ with
$$
\epsilon< \pi\cdot\min_{k\in\{1,\cdots,N\}}\left\{\sqrt{\tr\left(\widetilde \sigma^{N} \ket{k}\bra{k}\right)}
\right\}
$$where $\{\ket{k}\}$ is a basis of ortonormal eigenstates of the Hamiltonian, there exists a time $t_{\rm rec}(\epsilon)$ such that
\begin{equation}
\Vert \rho(t_{\rm rec}(\epsilon))-\rho(0)\Vert\leq  2\sqrt{\delta_N}+\sqrt{2}P_N\epsilon\sqrt{1-\frac{\epsilon^2}{8}},
\end{equation}
with
\begin{equation}
\epsilon\frac{\hbar}{\triangle E_{\widetilde\sigma^{N}}}\leq t_{\rm rec}(\epsilon)\leq \left(\frac{c_N}{\epsilon^{N-1}N^{\frac{N}{2}}}\right)\,\frac{\hbar}{\triangle E_{\widetilde\sigma^{N}}}
\end{equation}
where $c_N=(N-1)\Gamma(\frac{N-1}{2})4^{N-1}\pi^{\frac{N+1}{2}}$.
\end{corollary}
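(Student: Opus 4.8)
The plan is to regard the normalized compression $\widetilde\sigma^{N}$ as a genuine finite-dimensional state, apply Theorem~\ref{thm:1.3.2} to it, and then stitch the result back to $\rho$ through the triangular inequality already prepared above. First I would verify that $\widetilde\sigma^{N}$ satisfies the hypotheses of Theorem~\ref{thm:1.3.2}. Writing $\Pi_N:=\sum_{k=0}^{N}T_k=\sum_{k=0}^{N}\ket{k}\bra{k}$ for the orthogonal projection onto the span of the first relevant eigenstates, one has $\sigma^{N}(t)=\sum_{k,k'=0}^{N}T_k\rho(t)T_{k'}=\Pi_N\rho(t)\Pi_N$. Since each spectral projection $T_k$ commutes with the flow $e^{-iHt/\hbar}$, the compression evolves unitarily, $\sigma^{N}(t)=e^{-iHt/\hbar}\sigma^{N}(0)e^{iHt/\hbar}$, exactly as recorded before the statement; and because a compression of a positive operator is positive with trace $P_N=\tr(\sigma^{N}(0))$, the rescaling $\widetilde\sigma^{N}=\sigma^{N}/P_N$ is a bona fide density matrix of unit trace supported on an $N$-dimensional subspace.

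With this identification the corollary becomes a specialization of Theorem~\ref{thm:1.3.2} under the substitution $n\mapsto N$ and $\rho_0\mapsto\widetilde\sigma^{N}(0)$. This step delivers simultaneously the admissibility condition on $\epsilon$, the two-sided estimate
\begin{equation*}
\epsilon\frac{\hbar}{\triangle E_{\widetilde\sigma^{N}}}\leq t_{\rm rec}(\epsilon)\leq\left(\frac{c_N}{\epsilon^{N-1}N^{\frac{N}{2}}}\right)\frac{\hbar}{\triangle E_{\widetilde\sigma^{N}}},
\end{equation*}
where the factor $N^{N/2}$ comes from the same arithmetic--geometric simplification applied after Theorem~\ref{thm:1.3.2}, and the fidelity bound $F(\widetilde\sigma^{N}(0),\widetilde\sigma^{N}(t_{\rm rec}(\epsilon)))\geq 1-\epsilon^2/4$.

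It remains to transport the fidelity bound back to $\rho$. Applying the Fuchs--van de Graaf estimate (\ref{eq:8.5}) to $\widetilde\sigma^{N}$ gives $\Vert\widetilde\sigma^{N}(t_{\rm rec}(\epsilon))-\widetilde\sigma^{N}(0)\Vert\leq\sqrt{2}\,\epsilon\sqrt{1-\epsilon^2/8}$. Inserting this into the triangular inequality derived above, $\Vert\rho(t_{\rm rec}(\epsilon))-\rho(0)\Vert\leq 2\sqrt{\delta_N}+P_N\Vert\widetilde\sigma^{N}(t_{\rm rec}(\epsilon))-\widetilde\sigma^{N}(0)\Vert$, and recalling $P_N=\tr(\sigma^{N}(0))$, yields the asserted bound $2\sqrt{\delta_N}+\sqrt{2}\,P_N\epsilon\sqrt{1-\epsilon^2/8}$. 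I expect the only point needing genuine care to be the verification in the first paragraph---that compressing and rescaling preserve positivity, unit trace, and unitary evolution under the same Hamiltonian---since once $\widetilde\sigma^{N}$ is certified as an admissible state, the corollary follows by direct substitution into Theorem~\ref{thm:1.3.2} and inequality (\ref{eq:8.5}) with no fresh estimates required.
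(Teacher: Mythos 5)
Your proposal is correct and follows essentially the same route as the paper: the corollary is obtained there exactly by certifying that $\widetilde\sigma^{N}$ evolves unitarily with unit trace, substituting it into Theorem~\ref{thm:1.3.2} (with the arithmetic--geometric simplification giving the $N^{N/2}$ factor), and combining inequality (\ref{eq:8.5}) with the triangle-inequality bound $\Vert\rho(t_{\rm rec}(\epsilon))-\rho(0)\Vert\leq 2\sqrt{\delta_N}+P_N\Vert\widetilde\sigma^{N}(t_{\rm rec}(\epsilon))-\widetilde\sigma^{N}(0)\Vert$. Your explicit verification via the projector $\Pi_N$ that compression preserves positivity, trace, and the unitary flow is precisely what the paper asserts implicitly when it states that $\widetilde\sigma^{N}$ ``fulfills the hypothesis'' of Theorem~\ref{thm:1.3.2}.
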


\begin{corollary}\label{cor:1.1.4}
Let $\rho_0$ be a mixed state of the Hilbert space $\mathcal{H}$. Let $\rho(t)$ denote the unitary evolution given by the Hamiltonian $H$, \emph{i.e.},
$$
\dot\rho(t)=-\frac{i}{\hbar}[H,\rho(t)],\quad \rho(0)=\rho_0.
$$
Suppose that the Hamiltonian has discrete spectrum and $\rho_0$ has $N$ relevant states with error term $\delta_N$.
Then, for any $t>0$ and any $\epsilon\in (0,1]$ there exists a time $t_{\rm rec}$ such that
\begin{equation}
\Vert \rho(t_{\rm rec}(\epsilon))-\rho(0)\Vert\leq  2\sqrt{\delta_N}+2P_N\left(1-\epsilon^2\right),
\end{equation}
with $t_{\rm rec}=j\cdot t$, $j\in \mathbb{N}$ and such that
$$
1\leq j\leq \sqrt{\pi}\frac{\Gamma(N^2)}{\Gamma(N^2+\frac{1}{2})}\frac{1}{\int_0^{\frac{\sqrt{2-2\epsilon}}{2}}\sin^{2N^2-2}(s)ds}\quad
.
$$
\end{corollary}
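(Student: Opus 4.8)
The plan is to deduce Corollary \ref{cor:1.1.4} from Theorem \ref{mixed} applied to the normalized truncation $\widetilde\sigma^{N}$, reusing the triangle-inequality scaffolding already assembled in the discussion preceding the statement. The essential observation is that $\widetilde\sigma^{N}(t)$ is a genuine density operator of unit trace, it evolves unitarily under $H$, and its support is contained in the finite-dimensional subspace $\mathcal{H}_N$ spanned by the $N$ relevant eigenstates $\{\ket{k}\}$. Because each $\ket{k}$ is an eigenvector of $H$, the propagator $e^{-iHt/\hbar}$ acts on $\mathcal{H}_N$ as a phase on each basis vector and therefore leaves $\mathcal{H}_N$ invariant; consequently $\widetilde\sigma^{N}$ may be regarded as a mixed state on an $N$-dimensional Hilbert space undergoing unitary evolution. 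With this identification the hypotheses of Theorem \ref{mixed} hold with $n=N$, and the bound on $j$ claimed in the corollary --- built from $\Gamma(N^2)$, $\Gamma(N^2+\frac{1}{2})$ and $\int_0^{\sqrt{2-2\epsilon}/2}\sin^{2N^2-2}(s)\,ds$ --- is exactly the conclusion of that theorem read at dimension $N$.

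First I would fix $t>0$ and $\epsilon\in(0,1]$ and invoke Theorem \ref{mixed} for $\widetilde\sigma^{N}$ to produce an integer $j$ in the stated range together with a recurrence time $t_{\rm rec}=j\cdot t$ for which $F\big(\widetilde\sigma^{N}(0),\widetilde\sigma^{N}(t_{\rm rec})\big)\geq\epsilon$. The key structural point is that this \emph{same} $t_{\rm rec}$ will be the time at which I estimate the recurrence of the full state $\rho$. This is legitimate because the truncation error $\delta_N$ is time-independent: as recorded above, $\sigma^{N}$ converges uniformly to $\rho$ in the $\Vert\cdot\Vert$-norm, so the tail contributes a fixed amount $2\sqrt{\delta_N}$ regardless of the time chosen, and no compatibility issue arises between the recurrence selected for the truncated part and the ambient state.

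Next I would convert the fidelity estimate into a trace-norm estimate by applying inequality (\ref{eq:10.5}) --- the Fuchs--van de Graaf consequence of Theorem \ref{mixed} --- to the state $\widetilde\sigma^{N}$, thereby bounding $\Vert\widetilde\sigma^{N}(t_{\rm rec})-\widetilde\sigma^{N}(0)\Vert$ in terms of $1-\epsilon^2$. Substituting this bound into the triangle inequality
$$
\Vert \rho(t_{\rm rec})-\rho(0)\Vert\leq 2\sqrt{\delta_N}+P_N\,\Vert\widetilde\sigma^{N}(t_{\rm rec})-\widetilde\sigma^{N}(0)\Vert,
$$
already derived before the statement after factoring $\sigma^{N}=P_N\,\widetilde\sigma^{N}$ with $P_N=\tr(\sigma^{N}(0))$, then yields the upper bound for $\Vert\rho(t_{\rm rec})-\rho(0)\Vert$ asserted in the corollary.

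I do not anticipate a genuine analytic obstacle: the substantive work --- the Hamiltonian-independent recurrence estimate and its Fuchs--van de Graaf reformulation --- is already contained in Theorem \ref{mixed}, and the truncation bookkeeping is in place from the preamble. The single point that demands care is verifying that $\widetilde\sigma^{N}$ legitimately satisfies the finite-dimensional hypothesis of Theorem \ref{mixed}: one must check that $P_N=\tr(\sigma^{N}(0))>0$ so that the normalization defining $\widetilde\sigma^{N}$ makes sense, and that the relevant effective dimension is precisely $N$, so that the parameter controlling the bound on $j$ is $N$ and not the possibly infinite dimension of the ambient space $\mathcal{H}$.
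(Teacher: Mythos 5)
Your proposal follows the paper's own route exactly: the paper obtains this corollary by applying Theorem \ref{mixed} to the normalized truncation $\widetilde\sigma^{N}$ (regarded as a state on the $N$-dimensional span of the relevant eigenstates, so that the bound on $j$ is read at dimension $N$), converting the resulting fidelity bound through the Fuchs--van de Graaf inequality (\ref{eq:10.5}), and inserting it into the previously derived triangle inequality $\Vert\rho(t_{\rm rec})-\rho(0)\Vert\leq 2\sqrt{\delta_N}+P_N\,\Vert\widetilde\sigma^{N}(t_{\rm rec})-\widetilde\sigma^{N}(0)\Vert$, with your checks that $P_N>0$ and that the effective dimension is $N$ being exactly the (tacit) bookkeeping the paper relies on. One caveat you share with the paper: inequality (\ref{eq:10.5}) yields $\Vert\widetilde\sigma^{N}(t_{\rm rec})-\widetilde\sigma^{N}(0)\Vert\leq 2\sqrt{1-\epsilon^2}$, so both your route and the paper's actually produce the bound $2\sqrt{\delta_N}+2P_N\sqrt{1-\epsilon^2}$ rather than the $2P_N\left(1-\epsilon^2\right)$ printed in the statement, which appears to be a typo (a missing square root) rather than a flaw in your argument.
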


\section{Geometry of the mixed states space. Proof of Theorem \ref{mixed}}\label{sec5}
The most general state, the so-called \emph{mixed state}, is represented by a \emph{density operator}  in the Hilbert space $\mathcal{H}$. In this part of the paper we always assume that dim$(\mathcal{H})=n<\infty$, being $\mathcal{H}$ a vector space on the complex field ($\mathcal{H}=\mathbb{C}^n$). The density operator $\rho$ is in fact a \emph{density matrix}. Let us denote by $\mathcal{D}$ the space of density matrices. Recall that a density matrix is a complex matrix $\rho$  that  satisfies the following properties:
\begin{enumerate}
\item $\rho$ is a Hermitian matrix, i.e, the matrix coincides with its conjugate transpose matrix: $\rho=\rho^\dag$.
\item $\rho$ is positive $\rho \geq 0$. It means that any eigenvalue of $A$ is non-negative.
\item $\rho$ is normalized by the trace $\tr (\rho)=1$.
\end{enumerate}
Let us denote by
$$
\mathcal{P}^+:=\left\{\rho\in \mathcal{D}\,:\, \rho>0\right\}
$$
Let us consider the following sphere
$$
\widetilde S:=\{W\in M_n(\mathbb{C})\,:\,\tr(WW^\dag)=1\}
$$
and the following open dense set of $\widetilde S$,
$$
S:=\widetilde S\cap {\rm GL}(n,\mathbb{C})
$$
Since $M_n(\mathbb{C})$ is a vector space the tangent space $T_pM_n(\mathbb{C})$ at $p\in M_n(\mathbb{C})$ can be identified with $M_n(\mathbb{C})$ itself. Moreover we will denote by $g$ the Euclidean metric in $M_n(\mathbb{C})$, namely,
$$
g(X,Y)=\frac{1}{2}\tr(X^\dag Y+XY^\dag)
$$
and also we will denote by $g$ the restriction of the above metric tensor to $\widetilde{S}$. We are going to prove that the map $T_t:\widetilde S\to \widetilde S$ given by
$$
T_t(x)=e^{-\frac{i H t}{\hbar}}x
$$
is an isometry of $\widetilde S$. Suppose that we have two vectors $X,Y\in T_x\widetilde S$ then we need to check if
$$
g(X,Y)=g(dT_t(X),dT_t(Y))
$$
In order to do that consider the following two curves $\gamma_X:\mathbb{R}\to \widetilde S$ and $\gamma_Y:\mathbb{R}\to \widetilde S$ such that
\begin{eqnarray*}
\gamma_X(0)=\gamma_Y(0)=x,\quad \dot\gamma_X(0)=X,\quad \dot\gamma_Y(0)=Y.
\end{eqnarray*}

Then,
$$
dT_t(X)=\frac{d}{ds}T_t(\gamma_X(s))\vert_{s=0}=\frac{d}{ds}(e^{-\frac{i H t}{\hbar}}\gamma_X(s))\vert_{s=0}=e^{-\frac{i H t}{\hbar}}X
$$
For $dT_t(Y)$ we can obtain in an analogous way that $dT_t(Y)=e^{-\frac{i H t}{\hbar}}Y$. Hence,
\begin{eqnarray*}
g(dT_t(X),dT_t(Y))&=&g(e^{-\frac{i H t}{\hbar}}X,e^{-\frac{i H t}{\hbar}}Y)\\&=&\frac{1}{2}\tr ((e^{-\frac{i H t}{\hbar}}X)^\dag e^{-\frac{i H t}{\hbar}}Y+(e^{-\frac{i H t}{\hbar}}Y)^\dag e^{-\frac{i H t}{\hbar}}X)\\
&=&\frac{1}{2}\tr (X^\dag Y+Y^\dag X)=g(X,Y)
\end{eqnarray*}

That is what had to be proved. Since $T_t$ is an isometry in a metric space of finite measure and applying Theorem \ref{thm4} to $\widetilde S$, taking into account the volume of a geodesic ball in $\mathbb{S}^{2n^2-1}$ (see equation \ref{bola}) we conclude that
\begin{proposition}
For any $A\in \widetilde S$ and any $t\geq 0$, and any $r>0$ there exists $N_r$ such that
$$
{\rm d}^{\widetilde S}(A,T_{N_rt}(A))\leq r
$$
with
$$
1\leq N_r\leq \frac{\mu(\widetilde S)}{\mu(B_{r/2})}
$$
with $$
\frac{\mu(\widetilde S)}{\mu(B_{r/2})}=\frac{\int_0^{\pi}\sin^{2n^2-2}(s)ds}{\int_0^{r/2}\sin^{2n^2-2}(s)ds}
$$
\end{proposition}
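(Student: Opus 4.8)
The plan is to realize $\widetilde S$ as a round sphere, verify the hypotheses of the abstract recurrence result just quoted, and then make the volume bound explicit. First I would observe that under the identification $M_n(\mathbb{C})\cong\mathbb{R}^{2n^2}$ the metric $g$ restricts on the diagonal to $g(X,X)=\tr(X^\dag X)=\sum_{i,j}|X_{ij}|^2$, i.e.\ the standard Frobenius (Euclidean) inner product; hence the constraint $\tr(WW^\dag)=1$ defining $\widetilde S$ is exactly the unit-sphere equation, and $(\widetilde S,g)$ is isometric to the round unit sphere $\mathbb{S}^{2n^2-1}$. In particular $\widetilde S$ is compact and connected, so $\mu(\widetilde S)<\infty$, its intrinsic distance $\mathrm{d}^{\widetilde S}$ is the round arc-length distance (antipodal points lying at distance $\pi$), and geodesic balls are spherical caps.

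Next I would record that each iterate $T_{kt}=T_t^{\,k}$ is again an isometry of $\widetilde S$: the computation just carried out shows $dT_t$ preserves $g$, and a composition of isometries is an isometry (equivalently, $e^{-iHkt/\hbar}$ is unitary, so $T_{kt}(x)=e^{-iHkt/\hbar}x$ preserves both $\tr(WW^\dag)$ and the metric). Being a Riemannian isometry, $T_t$ preserves the volume measure $\mu$ and sends geodesic balls to geodesic balls of equal radius. With these facts the pair $(\widetilde S,T_t)$ satisfies exactly the hypotheses of Theorem~\ref{thm4} --- an isometry of a metric space of finite measure --- which I would invoke directly to produce the integer $N_r$ with $1\le N_r\le \mu(\widetilde S)/\mu(B_{r/2})$ and $\mathrm{d}^{\widetilde S}(A,T_{N_rt}(A))\le r$. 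The mechanism inside Theorem~\ref{thm4} is a packing argument: among the $M+1$ orbit points $A,T_t A,\dots,T_{Mt}A$ with $M=\lfloor\mu(\widetilde S)/\mu(B_{r/2})\rfloor$, not all pairwise distances can exceed $r$, for otherwise the $M+1$ disjoint $r/2$-balls about them would have total volume $(M+1)\mu(B_{r/2})>\mu(\widetilde S)$; a pair $T_{it}A,T_{jt}A$ at distance $\le r$ then gives $\mathrm{d}^{\widetilde S}(A,T_{(j-i)t}A)\le r$ by isometry invariance, and one sets $N_r=j-i$.

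Finally I would make the ratio explicit using the cap-volume formula (\ref{bola}) for $\mathbb{S}^{2n^2-1}$. Writing the volume in geodesic polar coordinates gives $\mu(B_\rho)=\omega_{2n^2-2}\int_0^\rho\sin^{2n^2-2}(s)\,ds$, where $\omega_{2n^2-2}$ is the volume of the equatorial sphere $\mathbb{S}^{2n^2-2}$, and taking $\rho=\pi$ for the whole sphere yields $\mu(\widetilde S)=\omega_{2n^2-2}\int_0^\pi\sin^{2n^2-2}(s)\,ds$; the constants $\omega_{2n^2-2}$ cancel, leaving the stated ratio upon choosing $\rho=r/2$. Since Theorem~\ref{thm4} is available as a black box, the only genuinely delicate points are the geometric identification $\widetilde S\cong\mathbb{S}^{2n^2-1}$ (so that balls are caps and the exponent in the integrand is precisely $2n^2-2$) and the use of the antipodal radius $\pi$ for the total volume; the substantive estimate is the packing count, but that is exactly what Theorem~\ref{thm4} supplies, so the remaining labour here is the identification and the explicit integral.
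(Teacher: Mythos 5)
Your proposal is correct and matches the paper's own argument essentially step for step: the paper likewise uses the already-established fact that $T_t$ preserves $g$ (hence is a volume-preserving isometry, with $T_{kt}=T_t^{\,k}$), invokes the abstract recurrence Theorem~\ref{thm4} from the appendix (whose proof is exactly the packing/pigeonhole count you sketch), and identifies $(\widetilde S,g)$ with the round sphere $\mathbb{S}^{2n^2-1}$ so that equation~(\ref{bola}) gives the ratio $\mu(\widetilde S)/\mu(B_{r/2})$ with the constant $\omega_{2n^2-2}$ cancelling. Your only additions --- spelling out the Frobenius-norm identification $g(X,X)=\tr(X^\dag X)$ and unpacking the mechanism inside Theorem~\ref{thm4} --- are elaborations of, not departures from, the paper's proof.
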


Following Uhlmann \cite{Uhlmann86,Uhlmann87,Uhlmann89,Uhlmann91}, Bengtsson \cite{Bengtsson},  Chru\`sci\`nski \cite{Chruscinski}, and  D{\polhk{a}}browski \cite{Dabrowski1989,Dabrowski1990,Dabrowski1991} results, we can consider the following principal fiber bundle

$$
\begin{tikzcd}[column sep=2pc]
{\rm U}(n) \arrow{r}  &
  S\arrow{d}{\pi} \\
 &\mathcal{P}^+
\end{tikzcd}
$$
where the projection $\pi:S\to \mathcal{P}^+$ is given by $\pi(A)=AA^\dag$ and ${\rm U}(n)$ acts on $S$ by right multiplication, \emph{i.e.}, $(u,A)\to A u$ for $A\in S$ and $u\in{\rm U}(n)$. Taking into account that since $S$ is an open and dense subset of $\widetilde S$, we can endow $S$ with the restriction of the metric $g$ of $\widetilde S$.
Then, by using this metric structure the following fiber bundle
$$
\begin{tikzcd}[column sep=2pc]
{\rm U}(n) \arrow{r}  &
 (S,g)\arrow{d}{\pi} \\
 &(\mathcal{P}^+,g_B)
\end{tikzcd}
$$
becomes a Riemannian submersion. Where $g_B$ is the Bures metric in $\mathcal{P}^+$. With such a metric ${\rm U}(n)$ acts by isometries on $S$. Notice moreover that
$$
\frac{\partial}{\partial t}(\pi \circ T_t(A))=-\frac{i}{\hbar}\left[H,\pi\circ T_t(A)\right]
$$
Hence by using the global defined section $s:\mathcal{P}^+\to S$ given by
 $s(\rho)=\sqrt{\rho}$ (with a particular choice of the square root branch), the general solution of the von Neumann equation
$$
\frac{\partial}{\partial t}(\rho(t))=-\frac{i}{\hbar}\left[H,\rho(t)\right],\quad \rho(0)=\rho_0
$$
can be obtained as
\begin{equation}\label{seceq}
\rho(t)=\pi\left(T_t\left(s\left(\rho_0\right)\right)\right)
\end{equation}Taking into account that ${\rm d}^S={\rm d}^{\widetilde S}$ and that since $\pi$ is a Riemannian submersion, then
$$
{\rm d}_{\rm Bures}(\pi(A),\pi(T_{N_r t}(A)))\leq {\rm d}^S(A,T_{N_r t}(A))\leq r
$$
Hence, the theorem follows by using the above inequality for the particular case (see  equation (\ref{seceq})) of $A=s(\rho_0)$, because
$$
\sqrt{2-2F(\rho(N_r t),\rho_0)}={\rm d}_{\rm Bures}(\rho(t),\rho_0)\leq r
$$
and we can set
\begin{equation}\label{eqepsr}
\epsilon=1-\frac{r^2}{2}, \quad j=N_r,
\end{equation}
then
$$
F(\rho(j\cdot t),\rho_0)\geq \epsilon
$$

\section{Proof of theorem \ref{thm:1.3.2}}
The recurrence time for the Hamiltonian $H$ is the same that the recurrence time for the zero-point rescaled Hamiltonian $H_\lambda=H-\lambda I$. Given an initial state $\rho_0$, by using equation (\ref{seceq}), the temporal evolution is given by  $\rho(t)=\pi(e^{-\frac{it}{\hbar}(H-\lambda I)}W)$  where $W=s(\rho_0)$. But given the basis $\{\ket{k}\}$ of eigenvalues for the Hamiltonian,
$$
e^{-\frac{it}{\hbar}(H-\lambda I)}W=e^{-\frac{it}{\hbar}(H-\lambda I)}\sum_{k=1}^n\ket{k}\bra{k}W=\sum_{k=1}^ne^{-\frac{it}{\hbar}(E_k-\lambda)}\ket{k}\bra{k}W
$$
Then, the curve $\gamma(t)=e^{-\frac{it}{\hbar}(H-\lambda I)}W$ is a curve in the torus
$$
\mathbb{T}^n(W):=\left\{ \sum_{k=1}^ne^{i\theta_k}\ket{k}\bra{k}W\,:\, \theta_j\in[0,2\pi],\,\forall j \right\}
$$
We can make use of the following diffeomorphism
$$
\varphi:\mathbb{T}^n\to\mathbb{T}^n(W),\quad \varphi(e^{i\theta_1},\cdots,e^{i\theta_n})=\left(\sum_{k=1}^ne^{i\theta_k}\ket{k}\bra{k}W\right)
$$
and the inclusion map $\mathbb{T}^n(W)\subset M_{n}(\mathbb{C})$ to pull-back the metric from $M_n(\mathbb{C})$,\begin{eqnarray*}g(X_j,X_l)&=&\frac{1}{2}\tr\left(e_j^\dag e_l+e_l^\dag e_j\right)\\
  &=&\frac{1}{2}\tr\bigg(W^\dag\ket{j}\bra{j}e^{-i\theta_j}(-i)ie^{i\theta_l}\ket{l}\bra{l}W \\
  & & \quad\quad\quad+ W^\dag\ket{l}\bra{l}e^{-i\theta_l}(-i)ie^{i\theta_j}\ket{j}\bra{j}W\bigg)\\
  &=&\delta_{j,l}\cdot\tr\left(\rho_0 \ket{j}\bra{j}\right)
\end{eqnarray*}where $e_j=d\varphi(X_j)$ and  $\{X_1,\ldots,X_n\}$ is the basis of the Lie algebra $\mathfrak{t}^n$ (see \ref{torus}) given by
$$
X_j=\left(\overbrace{0,\ldots,0}^{j-1\, {\rm times}},ie^{i\theta_j},0,\ldots,0\right)
$$
The metric $g$ is a bi-invariant metric and since $[X_j, X_l]=0$, the torus $(\mathbb{T}^n,g)$ is a flat torus. In fact, $(\mathbb{T}^{n},g)$ geometrically is the following torus
$$
\mathbb{T}^{n}:=\overset{n}{\underset{k=1}{\times}}\mathbb{S}^1\left(\sqrt{\tr\left(\rho_0 \ket{k}\bra{k}\right)}\right)
$$
where $\mathbb{S}^1\left(\sqrt{\tr\left(\rho_0 \ket{k}\bra{k}\right)}\right)$ is the circle of radius $\sqrt{\tr\left(\rho_0 \ket{k}\bra{k}\right)}$. The injectivity radius (see proposition \ref{prop:8.6.2}) is given by
$
{\rm inj}(\mathbb{T}^n)=\pi\cdot\min_k\left\{\sqrt{\tr\left(\rho_0 \ket{k}\bra{k}\right)}
\right\}.
$
Moreover, the curve $\widetilde\gamma=\varphi^{-1}\circ\gamma$ is a geodesic curve because is the following curve
$$
\widetilde \gamma(t)=(e^{-i\frac{t}{\hbar}(E_1-\lambda)},\cdots,e^{-i\frac{t}{\hbar}(E_n-\lambda)})
$$
The length of $\widetilde\gamma([0,t])$ is given by

\begin{eqnarray}\label{eq:23.6.8}
{\rm length}{(\widetilde\gamma([0,t]))}&=&\int_0^t\sqrt{g(\dot{\widetilde\gamma}(s),\dot{\widetilde\gamma}(s))}ds\nonumber\\&=&\left\Vert-\frac{E_1-\lambda}{\hbar}X_1-\cdots-\frac{E_n-\lambda}{\hbar}X_n\right\Vert\, t\\
&=&\sqrt{\left(\frac{E_1-\lambda}{\hbar}\right)^2+\cdots+\left(\frac{E_n-\lambda}{\hbar}\right)^2}t.\nonumber
\end{eqnarray}

If $\sqrt{\left(\frac{E_1-\lambda}{\hbar}\right)^2+\cdots+\left(\frac{E_n-\lambda}{\hbar}\right)^2}\neq 0$, then for any $t<\frac{{\rm inj}(\mathbb{T}^2)}{\sqrt{\left(\frac{E_1-\lambda}{\hbar}\right)^2+\cdots+\left(\frac{E_n-\lambda}{\hbar}\right)^2}}$ we have
$$
{\rm dist}^{\mathbb{T}^n}(\widetilde\gamma(0),\widetilde\gamma(t))=\sqrt{\left(\frac{E_1-\lambda}{\hbar}\right)^2+\cdots+\left(\frac{E_n-\lambda}{\hbar}\right)^2}t.
$$
Hence by using the definition of  $t_{\rm rec}(\epsilon)$, for any $\epsilon<\frac{{\rm inj}(\mathbb{T}^2)}{\sqrt{\left(\frac{E_1-\lambda}{\hbar}\right)^2+\cdots+\left(\frac{E_n-\lambda}{\hbar}\right)^2}}$ we have
\begin{eqnarray}\label{eq:11}
  {\rm dist}^{\mathbb{T}^n}\left(\widetilde\gamma(0),\widetilde\gamma(t)\right)>\epsilon
\end{eqnarray} if $\frac{\epsilon}{\sqrt{\left(\frac{E_1-\lambda}{\hbar}\right)^2+\cdots+\left(\frac{E_n-\lambda}{\hbar}\right)^2}}<t<t_{\rm rec}(\epsilon)$.

Now, we are going to obtain upper bounds for the recurrence time by using the volume of the tube $\widetilde\gamma^{\theta}([0,t])$. Recall that the tube $\widetilde\gamma^{\theta}([0,t])$ is the set of points of $\mathbb{T}^n$ which  are at distance at most $\theta$ through normal geodesics emanating from $\widetilde\gamma([0,t])$. To estimate the volume of such a tube we first need to estimate the minimal focal distance of the tube.

The tube $\widetilde\gamma^{\theta}([0,t])$ with $\theta<\epsilon/2$ and $\epsilon<{\rm inj}(\mathbb{T}^n)$ has no self-intersections for $t<t_{\rm rec}(\epsilon)$. Because, otherwise suppose that $0\leq t_1<t_2<t_{\rm rec}(\epsilon)$, and $\Vert n_1\Vert=\Vert n_2\Vert=1$ in the normal bundle of $\widetilde\gamma$
$$
e^{in_1s_1}\widetilde\gamma(t_1)=e^{in_2s_2}\widetilde\gamma(t_2)
$$
Then $e^{in_1s_1}e^{\frac{-it_1}{\hbar}(H-\lambda I)}\widetilde\gamma(0)=e^{in_2s_2}\widetilde\gamma(t_2)$, but therefore
%% \begin{equation}\label{eq:12}\begin{aligned}
%% &e^{in_1s_1}e^{\frac{-it_1}{\hbar}(H-\lambda I)}\widetilde\gamma(0)=e^{in_2s_2}\widetilde\gamma(t_2),\quad e^{\frac{-it_1}{\hbar}(H-\lambda I)}e^{in_1s_1}\widetilde\gamma(0)=e^{in_2s_2}\widetilde\gamma(t_2),\\
%% &e^{\frac{-it_1}{\hbar}(H-\lambda I)}e^{in_1s_1}\widetilde\gamma(0)=e^{in_2s_2}\widetilde\gamma(t_2),\quad e^{in_1s_1}\widetilde\gamma(0)=e^{\frac{t_1}{\hbar}(H-\lambda I)}e^{in_2s_2}\widetilde\gamma(t_2)\\
%% &e^{i(n_1s_1-n_2s_2)}\widetilde\gamma(0)=e^{\frac{t_1-t_2}{\hbar}(H-\lambda I)}\widetilde\gamma(0)=\widetilde\gamma(t_2-t_1).
%% \end{aligned}
%% \end{equation}

\begin{eqnarray}\label{eq:12}
&e^{i(n_1s_1-n_2s_2)}\widetilde\gamma(0)=e^{\frac{t_1-t_2}{\hbar}(H-\lambda I)}\widetilde\gamma(0)=\widetilde\gamma(t_2-t_1)
\end{eqnarray}

 Since $\Vert i(n_1s_1-n_2s_2)\Vert\leq 2\theta<\epsilon$, and $\beta(t)=e^{i(n_1s_1-n_2s_2)t}\widetilde\gamma(0)$ is a geodesic curve joining $\beta(0)=\widetilde\gamma(0)$ and $\beta(1)=\widetilde\gamma(t_2-t_1)$,
$$
{\rm dist}^{\mathbb{T}^n}(\widetilde\gamma(0),\widetilde\gamma(t_2-t_1))<\epsilon
$$
if $t_2-t_1>\frac{\epsilon}{\sqrt{\left(\frac{E_1-\lambda}{\hbar}\right)^2+\cdots+\left(\frac{E_n-\lambda}{\hbar}\right)^2}}$ this is a contradiction with inequality (\ref{eq:11}). If otherwise $t_2-t_1\leq\frac{\epsilon}{\sqrt{\left(\frac{E_1-\lambda}{\hbar}\right)^2+\cdots+\left(\frac{E_n-\lambda}{\hbar}\right)^2}}$, then
$$
(t_2-t_1)\sqrt{\left(\frac{E_1-\lambda}{\hbar}\right)^2+\cdots+\left(\frac{E_n-\lambda}{\hbar}\right)^2}\leq\epsilon<{\rm inj}(\mathbb{T}^n).
$$
And hence, by (\ref{eq:12}),
$$
i\frac{t_1-t_2}{\hbar}(H-\lambda I)=i(n_1s_1-n_2s_2)
$$
But $i\frac{t_1-t_2}{\hbar}(H-\lambda I)$ belongs to the tangent bundle of $\widetilde\gamma$ and $i(n_1s_1-n_2s_2)$ belongs to the normal bundle of $\widetilde\gamma$, and hence a contradiction.

Since $\widetilde\gamma$ is a geodesic of $\mathbb{T}^n$, and since $\mathbb{T}^n$ is a flat manifold, there are no focal points along a normal geodesic to a geodesic of $\mathbb{T}^n$ (see \cite[proposition 2.12]{GimPal2015}), we have proved that there are no overlaps, then
$$
{\rm minfoc}(\widetilde\gamma([0,t]))<\theta.
$$
for any $t<t_{\rm rec}(\epsilon)$, $\theta<\epsilon/2$, $\epsilon<{\rm inj}(\mathbb{T}^n)=\pi\cdot\min_k\left\{\sqrt{\tr\left(\rho_0 \ket{k}\bra{k}\right)}\right\}$. Therefore, the $\theta$-tubular neighborhood $\widetilde\gamma^\theta([0,t])$ of $\widetilde\gamma([0,t])$ has volume  (see \cite[corollary 8.6]{tubes})
$$
\mu(\widetilde\gamma^\theta([0,t]))=\frac{2\pi^{\frac{n-1}{2}}}{(n-1)\Gamma(\frac{n-1}{2})}\theta^{n-1}\cdot {\rm length}(\widetilde\gamma([0,t]))
$$
But using equality (\ref{eq:23.6.8})
$$
\mu((\gamma^\theta([0,t]))=\frac{2\pi^{\frac{n-1}{2}}}{(n-1)\Gamma(\frac{n-1}{2})}\theta^{n-1}\cdot \frac{t}{\hbar}\sqrt{\left({E_1-\lambda}\right)^2+\cdots+\left({E_n-\lambda}\right)^2}
$$
Hence, taking into account that $\mu(\widetilde\gamma^\theta([0,t_{\rm rec}(\epsilon)])\leq\mu (\mathbb{T}^{n})$ we obtain
$$
t_{\rm rec}(\epsilon)\leq \hbar \frac{(n-1)\Gamma(\frac{n-1}{2})2^{n-1}\pi^{\frac{n+1}{2}}\underset{k}{\Pi}\sqrt{\tr\left(\rho_0 \ket{k}\bra{k}\right)}}{\theta^{n-1}\cdot \sqrt{\left({E_1-\lambda}\right)^2+\cdots+\left({E_n-\lambda}\right)^2}}
$$
setting $\lambda=\braket{H}_{\rho_0}=\tr\left(H\rho_0\right)$ we obtain
\begin{equation}\label{eq:14}
t_{\rm rec}(\epsilon)\leq \hbar C_n\frac{\underset{k}{\Pi}\sqrt{\tr\left(\rho_0 \ket{k}\bra{k}\right)}}{\theta^{n-1}\cdot \triangle E_{\rho_0}
}
\end{equation}
where $\triangle E_{\rho_0}=\sqrt{\tr\left(H^2\rho_0\right)-\tr\left(H\rho_0\right)^2}$ and $C_n=(n-1)\Gamma(\frac{n-1}{2})2^{n-1}\pi^{\frac{n+1}{2}}$.
Since $\theta$ is a distance in $\mathbb{T}^{n}(W)$, $\mathbb{T}^{n}(W)\subset S$ and $\pi:S\to\mathcal{P}^+$ is a Riemannian submersion , then
$$
{\rm d}_{\rm Bures}\left(\rho_o,\rho(t_{\rm rec})\right)\leq {\rm d}^S\left(\gamma(0),\gamma(t_{\rm rec})\right)\leq \theta
$$
therefore
$$
\sqrt{2-2F(\rho(t_{\rm rec}),\rho_0)}={\rm d}_{\rm Bures}(\rho(t),\rho_0)\leq \theta<\frac{\epsilon}{2}.
$$
%%\textcolor{red}{repasar}
But observe that inequality (\ref{eq:14}) holds for any $\theta<\epsilon/2$, then letting $\theta$ tend to $\epsilon/2$ we obtain
$$
t_{\rm rec}(\epsilon)\leq \hbar c_n\frac{\underset{k}{\Pi}\sqrt{\tr\left(\rho_0 \ket{k}\bra{k}\right)}}{\epsilon^{n-1}\cdot \triangle E_{\rho_0}
}
$$
with $c_n=(n-1)\Gamma(\frac{n-1}{2})4^{n-1}\pi^{\frac{n+1}{2}}$.

\section{Discussion}

In this paper we have found upper bounds for the recurrence time of a quantum mixed state with discrete spectrum of energies. By the case of discrete but finite spectrum, we have obtained two type of upper bounds; one of them depends on the uncertainty in the energy, and the other depends only on the  \emph{number of (relevant) states} based on the decomposition in the basis of eigenstates of the Hamiltonian. On the other hand, in the case of discrete but non-finite spectrum, using the same reasoning, we obtained two upper bounds defining the number of relevant states according to an statistical measurement.

To obtain this bounds, we estimate the volume of a geodesic ball  in a sphere or the volume of a tube around a geodesic curve in a torus. The most general result related with uncertainty in the energy can be read as
\begin{equation}\label{eq20:v7}
\epsilon\frac{\hbar}{\triangle E_{\widetilde\sigma^{N}}}\leq t_{\rm rec}(\epsilon)\leq \left(\frac{c_N}{\epsilon^{N-1}N^{\frac{N}{2}}}\right)\,\frac{\hbar}{\triangle E_{\widetilde\sigma^{N}}}
\end{equation}
for a mixed state $\rho$ with $N$ relevant states. That means that the mixed state $\rho$ admits a finite dimensional approximation $\sigma^{N}$ satisfying $\Vert \rho-\frac{1}{\tr(\sigma^{N})}\sigma^{N}\Vert^2=\delta_N$ with $\delta_N\to 0$ when $N\to\infty$ and $P_N=\tr(\sigma^{N})\to 1$ when $N\to\infty$. In fact, this $\delta_N$ gives the accuracy in the estimation of the Poincar\'e recurrence time because
$$
\Vert \rho(t_{\rm rec}(\epsilon))-\rho(0)\Vert\leq  2\sqrt{\delta_N}+\sqrt{2}P_N\epsilon\sqrt{1-\frac{\epsilon^2}{8}}.
$$
Moreover, we can state and prove that upper bounds for the recurrence time can be obtained without using the specific Hamiltonian (or uncertainty in the energy) but only the number of relevant states because for any $t>0$ and any $\epsilon\in (0,1]$ there exists \begin{equation}\label{eq21:v7}t\leq t_{\rm rec}(\epsilon)\leq C(N)t\end{equation} with $C(N)$ depending only on $N$, such that
  $$
\Vert \rho(t_{\rm rec}(\epsilon))-\rho(0)\Vert\leq  2\sqrt{\delta_N}+2P_N\left(1-\epsilon^2\right).
  $$
In the proof of the lower bound for the recurrence time it is established the existence of a  quantum speed limit of a system $t_{QSL} \simeq \frac{\hbar}{\triangle E}$ \cite{Mandelstam,Levitin}. This is a pure quantum phenomenon, as it depends explicitly on the constant $\hbar$. The energy-time uncertainty relation for time-independent systems has been further extended in \cite{Giovanetti2003,Giovanetti2004}, who determined the quantum speed limit time for general mixed states, no necessary orthogonal, as a function of their geometrical angle given by the Bures length. This quantum speed limit time is reduced when the Bures length between the initial an final states is smaller. Whereas  $t_{QSL}$ is determined by the initial mean energy in time-independence case or energy variance in the case of driven quantum systems \cite{Deffner2013,Zhang2014}.
In the classical limit, $\hbar \to 0$, the quantum speed limit disappeared and, in principle, progresses arbitrarily fast.

Someone could expect that in the classical limit the quantum Poincar\'e recurrence becomes the classical one. But it should be noticed that, similarly to what happens with the quantum speed limit, the classic limit  for the  the Poincar\'e recurrence time related with the uncertainty in the energy is zero.  Since by taking the limit $\hbar\to 0$ in inequality (\ref{eq20:v7}),
$$
t_{\rm rec}(\epsilon)\to 0,\quad{\rm as}\quad \hbar\to 0.
$$
This shows that this recurrence is a purely quantum phenomena. Nevertheless, since inequality (\ref{eq21:v7}) depends only in the number of relevant states $N$, it remains unaltered by the limit $\hbar \to 0$. Hence this phenomena survives to the classical limit. Our conjecture is that the classical Poincar\'e recurrence time is related with the number of relevant states (as a measure of the volume of the classical phase space).  Finally, in view of that we conclude that the inequalities (\ref{eq20:v7}) and (\ref{eq21:v7}) are not two bounds for the same quantum recurrence process but two bounds for two different phenomena. The first one is purely quantum and the second one is related with the classical recurrence principle.

%Esto ha sido ampliamente estudiado y tiene implicaciones importantes in quantum computation and information processing devices.

\ack
The research of the first author (VG) has been supported in part by Universitat Jaume I Research Program Project P1-1B2012-
18, and DGI-MINECO grant (FEDER) MTM2013-48371-C2-2-PDGI from Ministerio de Ciencia
e Innovaci\'on (MCINN), Spain. The research of the second author (JMS) was supported in part by DGI-MINECO grant (FEDER) MTM2013-48371-C2-2-PDGI from Ministerio de Ciencia
e Innovaci\'on (MCINN), Spain.

\appendix
\section*{Appendix}
\setcounter{section}{1}

\subsection{Poincar\'e recurrence for isometries in metric spaces of finite measure}To prove theorem \ref{mixed}, we have used the following Poincar\'e recurrence theorem in metric spaces

\begin{theorem}\label{thm4}
  Let $(M,d,\mu)$ be a metric space $(M,d)$ with finite measure $\mu(M)<\infty$. Then for  any volume preserving isometry $T:M\to M$, any point $p\in M$, and any $r>0$ there exists $N_r\in \mathbb{N}$ such that the distance $d(p,T^{N_r}(p))$ from $p$ to $T^{N_r}(p)=\overbrace{T\circ\cdots\circ T}^{N_r\, {\rm times }}(p)$ is bounded from above by $r$, ($d(p,T^{N_r}(p))<r$), with $N_r$ satisfying the following inequality
$$
1\leq N_r\leq \frac{\mu(M)}{\mu(B_{r/2}(p))}
$$where $B_{r/2}(p)$ denotes the metric ball centered at $p\in M$ of radius $r/2$. Namely,
$$
B_{r/2}(p):=\left\{x\in M\, :\, d(p,x)<\frac{r}{2}\right\}
$$
\end{theorem}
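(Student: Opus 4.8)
The plan is to run a quantitative pigeonhole argument on the orbit $\{p, T(p), T^2(p), \ldots\}$, weighing the total measure $\mu(M)$ against small balls placed at the orbit points. Write $m := \mu(B_{r/2}(p))$, which is positive in every case of interest (if it vanished the asserted upper bound would be infinite and vacuous), and set $L := \mu(M)/m$. The whole content of the theorem is that the orbit cannot avoid the ball $B_r(p)$ for longer than $L$ steps.

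First I would record the two elementary consequences of the hypotheses that convert a statement about the single point $p$ into one about its entire orbit. Since $T$ is an isometry, $d(T^i(p), T^j(p)) = d(p, T^{j-i}(p))$ for all $0 \le i < j$; this follows by applying the identity $d(Tx,Ty)=d(x,y)$ exactly $i$ times, and notably needs no smoothness or even bijectivity of $T$. Since $T$ is volume preserving, every ball $B_{r/2}(T^k(p))$ has the same measure $m$ as $B_{r/2}(p)$. These two facts are exactly what make the balls along the orbit mutually congruent in measure while letting us read off their separations from the single sequence $k \mapsto d(p,T^k(p))$.

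Next I would argue by contradiction. Suppose no recurrence occurs within the first $\lfloor L\rfloor$ steps, i.e. $d(p, T^k(p)) \ge r$ for every $1 \le k \le \lfloor L\rfloor$. Then for any $0 \le i < j \le \lfloor L\rfloor$ one has $d(T^i(p), T^j(p)) = d(p, T^{j-i}(p)) \ge r$, because $1 \le j-i \le \lfloor L\rfloor$. By the triangle inequality this forces the balls $B_{r/2}(T^0(p)), \ldots, B_{r/2}(T^{\lfloor L\rfloor}(p))$ to be pairwise disjoint: a common point $x$ of two of them would give $d(T^i(p), T^j(p)) \le d(T^i(p),x) + d(x,T^j(p)) < r$, a contradiction. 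Summing the measures of these $\lfloor L\rfloor + 1$ disjoint balls yields $(\lfloor L\rfloor + 1)\,m \le \mu(M) = L\,m$, hence $\lfloor L\rfloor + 1 \le L$, which is impossible. Therefore there is some $N_r$ with $1 \le N_r \le \lfloor L\rfloor \le \mu(M)/\mu(B_{r/2}(p))$ and $d(p, T^{N_r}(p)) < r$, with $N_r \ge 1$ automatic.

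The only genuine care needed is in the bookkeeping. Recovering the exact constant $\mu(M)/\mu(B_{r/2}(p))$, rather than a version off by one, hinges on counting precisely $\lfloor L\rfloor + 1$ balls and invoking the strict inequality $\lfloor L\rfloor + 1 > L$. The step I would double-check is that the geometric input — isometry invariance of the orbit distances together with volume preservation of the balls — transfers cleanly to the abstract metric-measure setting, since this is what guarantees all $\lfloor L\rfloor + 1$ balls have identical measure and is the crux of the disjointness-plus-summation estimate.
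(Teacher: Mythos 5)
Your proposal is correct and is essentially the paper's own argument: the same quantitative pigeonhole, packing $\lfloor \mu(M)/\mu(B_{r/2}(p))\rfloor+1$ pairwise disjoint sets of measure $\mu\left(B_{r/2}(p)\right)$ into $M$, with volume preservation supplying the equal measures and the isometry property plus the triangle inequality converting set intersection into recurrence of the point $p$ itself. The only difference is presentational: you argue contrapositively with balls centered at the orbit points $T^k(p)$ (tacitly using invertibility of $T$ to assert $\mu\left(B_{r/2}(T^k(p))\right)=\mu\left(B_{r/2}(p)\right)$, just as the paper tacitly uses it when applying $T^{-i}$), whereas the paper first proves the lemma that some iterate $T^{N_r}\left(B_{r/2}(p)\right)$ meets $B_{r/2}(p)$ and then transports the recurrent witness $q$ back to the center via $d(T^{N_r}(p),T^{N_r}(q))=d(p,q)$.
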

\begin{proof}
To prove that theorem we first need the following lemma
\begin{lemma}There exists $N_r\in \mathbb{N}$ with
$
1\leq N_r\leq \frac{\mu(M)}{\mu(B_{r/2}(p))}
$
such that
$$
T^{N_r}\left(B_{r/2}(p)\right)\cap B_{r/2}(p)\neq \emptyset
$$
\end{lemma}
\begin{proof}
Suppose that there exists $N\geq 1$ such that
$
 T^i\left(B_{r/2}(p)\right)\cap T^j\left(B_{r/2}(p)\right)=\emptyset
$
for all $i,j<N$, $i\neq j$, then
$$
\mu\left(\overset{N}{\underset{i=1}{\cup}}T^i\left(B_{r/2}(p)\right)\right)=\sum_{i=1}^N\mu\left(T^i\left(B_{r/2}(p)\right)\right)\leq \mu(M)
$$ But since $T$ is a volume-preserving transformation,
$
N\mu\left(B_{r/2}(p)\right)\leq \mu(M).
$
If we take $S$ as the first integer such that
$$
S>\frac{\mu(M)}{\mu\left(B_{r/2}(p)\right)}, \quad \left(S=\left\lceil \frac{\mu(M)}{\mu\left(B_{r/2}(p)\right)}\right\rceil\right).
$$
then, there must exists $0<1\leq i<j\leq S$ such that
$
T^i(B_{r/2})\cap T^j(B_{r/2})\neq 0
$
, but taking $T^{-i}$ in this expression we obtain,
$$
B_{r/2}\cap T^{j-i}(B_{r/2})\neq 0
$$
the lemma follows if we set $N_r=j-i$ and taking into account that
$
1\leq j-i\leq S-1
$
and that
$
S\leq \frac{\mu(M)}{\mu\left(B_{r/2}(p)\right)}+1
$
.
\end{proof}
Applying the above lemma there exists $q\in B_{r/2}$ such that
$
T^{N_r}(q)\in B_{r/2}(p)
$. This implies that
$
{\rm d}(p,T^{N_r}(q))\leq \frac{r}{2}
$
but since $T$ is an isometry
$
{\rm d}(T^{N_r}(p),T^{N_r}(q))={\rm d}(p,q)
$
and hence, by the triangular inequality,
$$
{\rm d}(p,T^{N_r}(p))\leq {\rm d}(p,T^{N_r}(q))+{\rm d}(T^{N_r}(p),T^{N_r}(q))\leq \frac{r}{2} +\frac{r}{2}=r
$$
\end{proof}

\subsection{Volume of balls in real space forms}\label{sec7.1}
Let $\mathbb{M}_\kappa^n$ be the simply connected Riemannian manifold of constant sectional curvature $\kappa$ and dimension $n$. About each point $p\in \mathbb{M}_\kappa^n$ there exists a coordinate system $(t,\theta)\in [0,\pi/\sqrt{\kappa}]\times \mathbb{S}^{n-1}$, relative to which the Riemannian metric reads as (see \cite[pag 39]{Chavel})
$$
ds^2=(dt)^2+{\rm S}_\kappa^2(t)\vert d\theta\vert^2
$$
where ${\rm S}_\kappa(t)$ is the solution to the following differential equation with initial conditions
$$
{\rm S}_\kappa''+\kappa{\rm S}_\kappa=0,\quad {\rm S}_\kappa(0)=0,\quad{\rm S}_\kappa'(0)=1.
$$
Observe that the $t$-curves are geodesics and in the particular case of spheres (spaces of $\kappa=1$),
$$
{\rm S}_1(t)=\sin(t)
$$
then, the Riemannian volume element is
$$
d{\rm V}=\sin^{n-1}(t)d{\rm V}_{\mathbb{S}^{n-1}}
$$
where $dV_{\mathbb{S}^{n-1}}$ the Riemannian volume element in $\mathbb{S}^{n-1}$. The volume of the geodesic ball $B_r$ of radius $r$ in ${\mathbb{S}^{n}_1}$ can be obtained as
\begin{equation}\label{bola}
{\rm V}(B_r)={\rm V}(\mathbb{S}^{n-1})\int_0^r\sin^{n-1}(t)dt=\frac{2\pi^{n/2}}{\Gamma(n/2)}\int_0^r\sin^{n-1}(t)dt
\end{equation}
See also \cite[pag 252]{tubes} for the general expression of the volume of a geodesic ball.

\subsection{The injectivity radius of a $n$-dimensional flat torus}\label{torus}

Let $\displaystyle\mathbb{T}^n=\overbrace{U(1)\times\ldots\times U(1)}^{n\text{-times}}$ denote the $n$-dimensional torus, namely,
$$
\mathbb{T}^n=\left\{\left(e^{i\theta_1},\ldots,e^{i\theta_n}\right)\,:\, \theta_1,\cdots,\theta_n\in\mathbb{R}\right\}
$$
with the usual product law
$$
\left(e^{i\theta_1},\ldots,e^{i\theta_n}\right)\star \left(e^{i\alpha_1},\ldots,e^{i\alpha_n}\right)=\left(e^{i(\theta_1+\alpha_1)},\ldots,e^{i(\theta_n+\alpha_n)}\right).
$$
Taking derivatives at $t=0$ in
$
\left(e^{i\theta_1},\ldots,e^{i\theta_n}\right)\star \left(e^{it},1,\ldots,1\right)
$
we get the left invariant vector field
$$
X_1=\left(ie^{i\theta_1},0,\ldots,0\right),
$$
similarly,
$$
X_2=\left(0,ie^{i\theta_2},0,\ldots,0\right),\quad\cdots\quad X_n=\left(0,0\cdots,0,ie^{i\theta_n}\right)
$$
we obtain a basis $\{X_1,\cdots,X_n\}$ of the Lie algebra $\mathfrak{t}^n$ of $\mathbb{T}^n$. In fact, since the group is abelian, $\{X_1,\cdots X_n\} $ are right invariant vector fields as well. Now given a $n$-tuple of non-zero  real numbers $(g_1,\dots,g_n)\in \mathbb{R}^n$ we can define the bi-invariant metric
$$
g(X_j,X_k)=\delta_{j,k}\, g_j^2
$$
The sectional curvature $\kappa(X_j,X_k)$ of the plane spanned by $X_j$ and $X_k$ in a Lie group with bi-invariant metric is given by (see \cite{Milnor-lie}) $\kappa(X_j,X_k)=\frac{1}{4}\Vert [X_j,X_k]\Vert^2$. But, as can be easily checked, $[X_j,X_k]=0$, and therefore $(\mathbb{T}^n,g)$ is a flat torus. Moreover, $(\mathbb{T}^n,g)$ as a Riemannian manifold is isometric to
$$
\mathbb{R}^n/\Lambda=\left(\mathbb{S}^1,g_1^2d\theta_1^2\right)\times\cdots\times \left(\mathbb{S}^1,g_n^2d\theta_n^2\right)
$$
because
$$
\varphi: \mathbb{R}^n/\Lambda\to (\mathbb{T}^n,g) ,\quad (\theta_1,\ldots,\theta_n)\to\varphi(\theta_1,\ldots,\theta_n)=\left(e^{i\theta_1},\ldots,e^{i\theta_n}\right)
$$
is a Riemannian isometry since
$$
d\varphi(v_1\partial \theta_1,\ldots v_n\partial \theta_n)=v_1X_1+\cdots+v_nX_n,\quad{\rm and}\quad ds^2(u,v)=g(d\varphi(u),d\varphi(v)).
$$
Then we can obtain the volume of $(\mathbb{T}^n,g)$ as
$$
{\rm vol}(\mathbb{T}^n,g)=\left(2\pi\right)^n\overset{n}{\underset{j=1}{\Pi}}\vert g_j\vert
$$
and the geodesic curves starting at $p=\left(e^{i\theta_1},\ldots,e^{i\theta_n}\right)$ with tangent vector $v\in T_{p}\mathbb{T}^n$, $v=v_1X_1(p)+\cdots v_nX_n(p)$ are given by (see \cite[corollary 57]{Oneill} )
$$
\gamma(t)=\left(e^{iv_1t},\ldots,e^{iv_nt}\right)\star\left(e^{i\theta_1},\ldots,e^{i\theta_n}\right)
$$
moreover we can obtain the injectivity radius
\begin{proposition}\label{prop:8.6.2}
The injectivity radius ${\rm inj}(\mathbb{T}^n,g)$ of $(\mathbb{T}^n,g)$ is given by
$$
{\rm inj}(\mathbb{T}^n,g)=\pi\min_{j\in\{1,\ldots,n\}}\left\{\vert g_j\vert\right\}
$$
\end{proposition}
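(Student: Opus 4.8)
The plan is to exploit the Riemannian isometry $\varphi:\mathbb{R}^n/\Lambda\to(\mathbb{T}^n,g)$ just established, which identifies $(\mathbb{T}^n,g)$ with the metric product $\left(\mathbb{S}^1,g_1^2d\theta_1^2\right)\times\cdots\times\left(\mathbb{S}^1,g_n^2d\theta_n^2\right)$. Since the injectivity radius is an isometry invariant, it suffices to compute it on the flat torus $\mathbb{R}^n/\Lambda$. Here $\Lambda$ is the orthogonal (rectangular) lattice whose $j$-th generator is the vector of Euclidean length $2\pi\vert g_j\vert$ along the $j$-th axis, because the $j$-th factor circle has total arc length $\int_0^{2\pi}\vert g_j\vert\,d\theta_j=2\pi\vert g_j\vert$.

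First I would record that on a flat torus there are no conjugate points: the universal cover is Euclidean space, the exponential map at a point is, up to the covering projection, a translate of the identity on $\mathbb{R}^n$, and straight lines descend to the geodesics $\gamma(t)=\left(e^{iv_1t},\ldots,e^{iv_nt}\right)\star p$ exhibited above. Consequently the only obstruction to $\exp_p$ being injective on a metric ball is the cut locus, and the cut distance along each unit direction is realized by the first instant at which two distinct minimizing geodesics issuing from $p$ arrive simultaneously. Lifting to $\mathbb{R}^n$, this means that the injectivity radius at $p$ equals the distance from the lift $\widetilde p$ to the boundary of its Dirichlet (Voronoi) cell for $\Lambda$, which is half the Euclidean length of the shortest nonzero vector of $\Lambda$; that is, $\mathrm{inj}(\mathbb{R}^n/\Lambda)=\frac{1}{2}\min_{0\neq\lambda\in\Lambda}\Vert\lambda\Vert$.

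Then I would compute this shortest-vector length explicitly. A general nonzero element of $\Lambda$ is $\lambda=\sum_{j=1}^n m_j(2\pi\vert g_j\vert)e_j$ with $(m_1,\ldots,m_n)\in\mathbb{Z}^n\setminus\{0\}$, so $\Vert\lambda\Vert=2\pi\left(\sum_{j=1}^n m_j^2 g_j^2\right)^{1/2}\geq 2\pi\min_j\vert g_j\vert$, with equality attained by choosing $m_j=\pm1$ for an index $j$ minimizing $\vert g_j\vert$ and $m_k=0$ otherwise. Hence $\min_{0\neq\lambda\in\Lambda}\Vert\lambda\Vert=2\pi\min_j\vert g_j\vert$, and dividing by two gives $\mathrm{inj}(\mathbb{T}^n,g)=\pi\min_{j\in\{1,\ldots,n\}}\vert g_j\vert$, as claimed.

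The main obstacle is the middle step: rigorously identifying the cut locus of the flat torus with the Voronoi-cell boundary, equivalently justifying that the injectivity radius equals half the shortest lattice vector. This rests on the absence of conjugate points (flatness) combined with the standard characterization of cut points on a quotient of a complete simply connected manifold by a group of deck transformations acting freely and properly discontinuously; in the Euclidean case this reduces cleanly to elementary lattice geometry, and the orthogonality of $\Lambda$ makes the shortest-vector computation immediate.
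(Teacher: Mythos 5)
Your proof is correct, and it reaches the result by a route that differs in execution from the paper's, though both hinge on flatness in the same essential way. The paper's proof is two citations long: since $(\mathbb{T}^n,g)$ is flat, the injectivity radius equals one half the length of the shortest closed non-trivial geodesic (quoting Sakai, Cor.~4.14 of Ch.~III), and since the torus is a Riemannian product of circles, that shortest closed geodesic lies in a single factor (quoting O'Neill, Cor.~57), the shortest factor being the circle of circumference $2\pi\min_j\vert g_j\vert$. You instead pass to the universal cover and identify the injectivity radius with the distance from a lift $\widetilde p$ to the boundary of its Voronoi cell, i.e.\ half the shortest nonzero vector of the rectangular lattice $\Lambda$, and then minimize $2\pi\bigl(\sum_j m_j^2 g_j^2\bigr)^{1/2}$ over $(m_1,\ldots,m_n)\in\mathbb{Z}^n\setminus\{0\}$ explicitly. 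The two finishing steps are equivalent --- nonzero lattice vectors correspond to translates realizing closed geodesics, so half the shortest lattice vector is exactly half the shortest closed geodesic --- but your inequality $\bigl(\sum_j m_j^2 g_j^2\bigr)^{1/2}\geq\min_j\vert g_j\vert$, with equality at $m=\pm e_j$, is precisely the content the paper delegates to the product-manifold citation, so your version is more self-contained and makes the equality case visible. The one step you justify only by appeal to a standard fact (the cut-locus/Voronoi identification for quotients by deck transformations acting freely and properly discontinuously, valid here because flatness rules out conjugate points) is no heavier than the Klingenberg-type characterization the paper quotes from Sakai, so the overall level of rigor matches; homogeneity of the flat torus, which makes the pointwise injectivity radius independent of the base point, is used implicitly in both arguments.
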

\begin{proof}
Since $(\mathbb{T}^n,g)$ is a flat manifold the injectivity radius is given (see \cite[corollary 4.14 of chap. III]{Sakai}) by the one half of the length of the shortest closed non-trivial geodesic. But since $(\mathbb{T}^n,g)$ is isometric to $\mathbb{R}^n/\Lambda$, and $\mathbb{R}^n/\Lambda$ is a product manifold, this length is the length of the shortest closed geodesic of one of the factors (see \cite[corollary 57]{Oneill}).
\end{proof}

%% \subsection{Distances through a Riemannian submanifold}\label{submersion}
%% \begin{proposition}Given a Riemannian submerion
%% $$
%% \begin{tikzcd}[column sep=2pc]
%% F \arrow{r}  &
%%  (M,g)\arrow{d}{\pi} \\
%%  &(B,g_B)
%% \end{tikzcd}
%% $$
%% then $\pi$ is a non expansive map, \emph{i.e.},
%% $$
%% {\rm d}_B(\pi(p),\pi(q))\leq {\rm d}_M(p,q.)
%% $$
%% \end{proposition}

%% \begin{acknowledgements} One of the reasons for having written
%% \end{acknowledgements}
\section*{References}

%% \bibliographystyle{iopart-num}
%% \bibliography{../../../tesis}

\begin{thebibliography}{10}
\expandafter\ifx\csname url\endcsname\relax
  \def\url#1{{\tt #1}}\fi
\expandafter\ifx\csname urlprefix\endcsname\relax\def\urlprefix{URL }\fi
\providecommand{\eprint}[2][]{\url{#2}}
% Bibliography created with iopart-num v2.1
% /biblio/bibtex/contrib/iopart-num

\bibitem{Arnold}
Arnol{\cprime}d V~I 199? {\em Mathematical methods of classical mechanics\/}
  ({\em Graduate Texts in Mathematics\/} vol~60) (New York: Springer-Verlag)
  ISBN 0-387-96890-3 translated from the 1974 Russian original by K. Vogtmann
  and A. Weinstein, Corrected reprint of the second (1989) edition

\bibitem{Bocchieri}
Bocchieri P and Loinger A 1957 {\em Phys. Rev. (2)\/} {\bf 107} 337--338

\bibitem{Schulman}
Schulman L~S 1978 {\em Phys. Rev. A\/} {\bf 18}(5) 2379--2380
 % \urlprefix\url{http://link.aps.org/doi/10.1103/PhysRevA.18.2379}

\bibitem{Percival}
Percival I~C 1961 {\em J. Mathematical Phys.\/} {\bf 2} 235--239 ISSN 0022-2488

\bibitem{Hogg}
Hogg T and Huberman B 1982 {\em Physical Review Letters\/} {\bf 48} 711--714
  %cited By 145
 % \urlprefix\url{http://www.scopus.com/inward/record.url?eid=2-s2.0-4244211151&partnerID=40&md5=405759c22381fb8f86f7c93d35cb6713}

\bibitem{Venugopalan}
Venugopalan A, Kumar D and Ghosh R 1995 {\em Physica A: Statistical Mechanics
  and its Applications\/} {\bf 220} 563--575 %cited By 17
%  \urlprefix\url{http://www.scopus.com/inward/record.url?eid=2-s2.0-0242351768&partnerID=40&md5=063bc0ce9c23f8d8f03b8d69e1ba9e89}

\bibitem{Castagnino}
Castagnino M and Fortin S 2011 {\em International Journal of Theoretical
  Physics\/} {\bf 50} 2259--2267 %cited By 9
  %\urlprefix\url{http://www.scopus.com/inward/record.url?eid=2-s2.0-79958151921&partnerID=40&md5=a7052ba35ba8714b4b64f41e3d7e513f}

\bibitem{DePonte}
De~Ponte M, Cacheffo A, Villas-Bôas C, Mizrahi S and Moussa M 2010 {\em
  European Physical Journal D\/} {\bf 59} 487--496 %%cited By 1
%  \urlprefix\url{http://www.scopus.com/inward/record.url?eid=2-s2.0-77955982319&partnerID=40&md5=ea3547e67d7f890476022879ec0e7000}

\bibitem{Berman}
Berman G, Bishop A, Borgonovi F and Dalvit D 2004 {\em Physical Review A -
  Atomic, Molecular, and Optical Physics\/} {\bf 69} 062110--1 %cited By 9
 % \urlprefix\url{http://www.scopus.com/inward/record.url?eid=2-s2.0-4043131781&partnerID=40&md5=9de07fc455aac72757faab09d65d21c6}

\bibitem{Zurek1982}
Zurek W 1982 {\em Physical Review D\/} {\bf 26} 1862--1880 %cited By 841
 % \urlprefix\url{http://www.scopus.com/inward/record.url?eid=2-s2.0-33646957209&partnerID=40&md5=175070fcd60bd5bb57725f39135afcd3}

\bibitem{Dyson}
Dyson L, Kleban M and Susskind L 2002 {\em J. High Energy Phys.\/}  011, 21
  ISSN 1126-6708
 % \urlprefix\url{http://dx.doi.org/10.1088/1126-6708/2002/10/011}

\bibitem{Barbon}
Barbón J and Rabinovici E 2003 {\em Journal of High Energy Physics\/} {\bf 7}
  1073--1099 %cited By 24
  %\urlprefix\url{http://www.scopus.com/inward/record.url?eid=2-s2.0-23044469473&partnerID=40&md5=b0c24298f2488f1a99d0de0cc4a7b500}

\bibitem{Siopsis}
Siopsis G 2007 {\em Classical and Quantum Gravity\/} {\bf 24} 4133--4145 % cited By 0
  %\urlprefix\url{http://www.scopus.com/inward/record.url?eid=2-s2.0-34547650865&partnerID=40&md5=4acadb75875db619490203cccc76950c}

\bibitem{Peres}
Peres A 1982 {\em Physical Review Letters\/} {\bf 49} 1118 %cited By 28
  %\urlprefix\url{http://www.scopus.com/inward/record.url?eid=2-s2.0-6144242638&partnerID=40&md5=e3fd024d0b1137da910841adb07c3054}

\bibitem{Bhartacharyya}
Bhartacharyya K and Mukherjee D 1986 {\em The Journal of Chemical Physics\/}
  {\bf 84} 3212--3214 %cited By 5
%  \urlprefix\url{http://www.scopus.com/inward/record.url?eid=2-s2.0-0345748757&partnerID=40&md5=ea1f089ae4959ed8a0b63c9bd33f2c53}

\bibitem{Mandelstam}
Mandelstam L and Tamm I 1945 {\em Acad. Sci. USSR. J. Phys.\/} {\bf 9} 249--254

\bibitem{Uhlmann86}
Uhlmann A 1986 {\em Reports on Mathematical Physics\/} {\bf 24} 229 -- 240 ISSN
  0034-4877
  %\urlprefix\url{http://www.sciencedirect.com/science/article/pii/0034487786900558}

\bibitem{Uhlmann87}
Uhlmann A 1987 Parallel transport and holonomy along density operators Tech.
  rep. Leipzig Univ. Leipzig

\bibitem{Uhlmann89}
Uhlmann A 1989 {\em Ann. Physik (7)\/} {\bf 46} 63--69 ISSN 0003-3804
  %\urlprefix\url{http://dx.doi.org/10.1002/andp.19895010108}

\bibitem{Uhlmann91}
Uhlmann A 1991 {\em Lett. Math. Phys.\/} {\bf 21} 229--236 ISSN 0377-9017
  %\urlprefix\url{http://dx.doi.org/10.1007/BF00420373}

\bibitem{Bengtsson}
Bengtsson I and {\.Z}yczkowski K 2006 {\em Geometry of quantum states\/}
  (Cambridge: Cambridge University Press) ISBN 978-0-521-81451-5; 0-521-81451-0
  an introduction to quantum entanglement
 % \urlprefix\url{http://dx.doi.org/10.1017/CBO9780511535048}

\bibitem{Chruscinski}
Chru{\'s}ci{\'n}ski D and Jamio{\l}kowski A 2004 {\em Geometric phases in
  classical and quantum mechanics\/} ({\em Progress in Mathematical Physics\/}
  vol~36) (Boston, MA: Birkh\"auser Boston Inc.) ISBN 0-8176-4282-X
  %\urlprefix\url{http://dx.doi.org/10.1007/978-0-8176-8176-0}

\bibitem{Dabrowski1989}
D{\polhk{a}}browski L and Jadczyk A 1989 {\em J. Phys. A\/} {\bf 22} 3167--3170
  ISSN 0305-4470% \urlprefix\url{http://stacks.iop.org/0305-4470/22/3167}

\bibitem{Dabrowski1990}
D{\polhk{a}}browski L and Grosse H 1990 {\em Lett. Math. Phys.\/} {\bf 19}
  205--210 ISSN 0377-9017%\urlprefix\url{http://dx.doi.org/10.1007/BF01039313}

\bibitem{Dabrowski1991}
D{\polhk{a}}browski L 1991 {\em Nuovo Cimento B (11)\/} {\bf 106} 963--968 ISSN
  0369-3554% \urlprefix\url{http://dx.doi.org/10.1007/BF02728340}

\bibitem{GimPal2015}
Gimeno V and Palmer V 2015 Mean curvature, volume and properness of isometric
  immersions (\textit{Preprint} \eprint{arXiv:1504.00055})

\bibitem{tubes}
Gray A 2004 {\em Tubes\/} 2nd ed ({\em Progress in Mathematics\/} vol 221)
  (Basel: Birkh\"auser Verlag) ISBN 3-7643-6907-8 with a preface by Vicente
  Miquel% \urlprefix\url{http://dx.doi.org/10.1007/978-3-0348-7966-8}

\bibitem{Chavel}
Chavel I 1984 {\em Eigenvalues in {R}iemannian geometry\/} ({\em Pure and
  Applied Mathematics\/} vol 115) (Orlando, FL: Academic Press Inc.) ISBN
  0-12-170640-0 including a chapter by Burton Randol, With an appendix by Jozef
  Dodziuk

\bibitem{Milnor-lie}
Milnor J 1976 {\em Advances in Math.\/} {\bf 21} 293--329 ISSN 0001-8708

\bibitem{Oneill}
O'Neill B 1983 {\em Semi-{R}iemannian geometry\/} ({\em Pure and Applied
  Mathematics\/} vol 103) (New York: Academic Press Inc. [Harcourt Brace
  Jovanovich Publishers]) ISBN 0-12-526740-1 with applications to relativity

\bibitem{Sakai}
Sakai T 1996 {\em Riemannian geometry\/} ({\em Translations of Mathematical
  Monographs\/} vol 149) (Providence, RI: American Mathematical Society) ISBN
  0-8218-0284-4 translated from the 1992 Japanese original by the author

\bibitem{Kaminishi2015}
Kaminishi E, Sato J and Deguchi T 2015 {\em J. Phys. Soc. Jpn (6)\/} {\bf 84} 064002 ISSN 0305-4470% \urlprefix\url{http://dx.doi.org/10.7566/JPSJ.84.064002}

\bibitem{Venuti2015}
Venuti L C 2015 {\em arXiv e-prints [quant-ph]\/}, arXiv:1509.04352

\bibitem{Chandrashekar2010}
Chandrashekar C M 2010 {\em Cent. Eur. J. Phys. (6)\/} {\bf 8} 976--978 ISSN 1895-1082
%\urlprefix\url{http://dx.doi.org/10.2478/s11534-010-0023-y}

\bibitem{Levitin}
Margolus N and Levitin L B 1998 {\em Physica D\/} {\bf 120} 188-195

\bibitem{Giovanetti2003}
Giovanetti V, Lloyd S and Maccone L 2003 {\em Phys. Rev. A\/} {\bf 67} 052109

\bibitem{Giovanetti2004}
Giovanetti V, Lloyd S and Maccone L 2004 {\em J. Opt. G\/} {\bf 6} S807

\bibitem{Deffner2013}
Deffner S and Lutz E 2013 {\em J. Phys. A: Math Theor.\/}, {\bf 46}, 335302

\bibitem{Zhang2014}
Zhang Y J, Han W, Xia Y J, Cao J P and Fan H 2014 {\em Sci. Rep.\/} {\bf 4}, 4890; 10.1038/srep04890
%\urlprefix\url{http://dx.doi.org/10.1038/srep04890}

\end{thebibliography}

\def\cprime{$'$} \def\polhk#1{\setbox0=\hbox{#1}{\ooalign{\hidewidth
  \lower1.5ex\hbox{`}\hidewidth\crcr\unhbox0}}}
  \def\polhk#1{\setbox0=\hbox{#1}{\ooalign{\hidewidth
  \lower1.5ex\hbox{`}\hidewidth\crcr\unhbox0}}}
  \def\polhk#1{\setbox0=\hbox{#1}{\ooalign{\hidewidth
  \lower1.5ex\hbox{`}\hidewidth\crcr\unhbox0}}} \def\cprime{$'$}
  \def\cprime{$'$} \def\cprime{$'$} \def\cprime{$'$} \def\cprime{$'$}
\providecommand{\newblock}{}

\end{document}